\newtheorem{no}{\textbf{Note}}
\newtheorem{rem}{\textbf{Remark}}
\newtheorem{exa}{\textbf{Example}}
\begin{document}

\pagestyle{plain}

\mainmatter

\title{On Secrecy Capacity of Minimum Storage Regenerating Codes
\protect\footnote{$^1$Kun Huang and Ming Xian are with State Key Laboratory of Complex Electromagnetic Environment Effects on Electronics and Information System, National University of Defense Technology, Changsha, 410073, China \email{(khuangresearch923@gmail.com; qwertmingx@tom.com)}.\\
$^2$ Udaya Parampalli is with Department of Computing and Information Systems, University of Melbourne,  VIC 3010, Australia \email{(udaya@unimelb.edu.au)}.\\
}
}
\titlerunning{}

\author{
Kun Huang\inst{1},\ Udaya Parampalli\inst{2}, \and\ Ming Xian\inst{1}
}

\authorrunning{}

\institute{}

\maketitle

\begin{abstract}
In this paper, we revisit the problem of characterizing the secrecy capacity of minimum storage regenerating (MSR) codes under the passive $(l_1,l_2)$-eavesdropper model, where the eavesdropper has access to data stored on $l_1$ nodes and the repair data for an additional $l_2$ nodes. We study it from the information-theoretic perspective. First, some general properties of MSR codes as well as a simple and generally applicable upper bound on secrecy capacity are given. Second, a new concept of \emph{stable} MSR codes is introduced, where the stable property is shown to be closely linked with secrecy capacity. Finally, a comprehensive and explicit result on secrecy capacity in the linear MSR scenario is present, which generalizes all related works in the literature and also predicts certain results for some unexplored linear MSR codes.

{\bf Key Words: } MSR Codes, Repair Data, Secrecy Capacity, Upper Bounds.
\end{abstract}

\section{INTRODUCTION}
Distributed storage systems (DSSs) are an essential part of large scale data storage systems required for many new emerging distributed networking applications such as social networking, video sharing, peer to peer networking and large scale data centres. As is common in such storage systems, redundancy is indispensably introduced to ensure reliability and availability owing to frequent node failures. The main approaches to introduce redundancy in DSSs are through replication, erasure codes, and more recently using regenerating codes \cite{Re:A.Dimakis}. Erasure codes in general can achieve higher reliability for the same level of redundancy when compared to the schemes that provide replication \cite{Re:H.Weatherspoon}. Regenerating codes are a recent innovation of erasure codes that has efficient performance on repair of failed nodes in DSSs \cite{Re:A.G.Dimakis1}.

\subsection{Regenerating Codes.}

Regenerating codes \cite{Re:A.Dimakis} are a family of maximal distance separable (MDS) codes determined by a tradeoff between the amount of storage per node and the repair bandwidth. In the framework of regenerating codes, an encoded data file is split into $n\alpha$ symbols and then dispersed across $n$ nodes, where all the symbols are drawn from a finite field $\mathbb{F}_q$ and each node stores a collection of $\alpha$ symbols. The dispersing manner requires that any data collector can retrieve the original data message by connecting to any $k$ out of $n$ nodes. The node repair can be accomplished by permitting a new node to connect to any $d$ helper nodes from the surviving $(n-1)$ nodes by downloading $\beta\leq \alpha$ symbols from each node. In the literature, a regenerating code is represented by a parameter set $\{n,k,d,\alpha,\beta,B\}$, where $B$ is the size of original data message and $d\beta$ is the total amount of data transferred for node repair that is termed repair bandwidth.

The cut-set bound based on the concept of information flow \cite{Re:R.Ahlswede} requires that the parameters of a regenerating code must necessarily satisfy:
\begin{equation}\label{cut}
B\leq \sum_{i=1}^{k}\min\{\alpha,(d-i+1)\beta\}.
\end{equation}
In \cite{Re:A.Dimakis}, Dimakis et al derive the above tradeoff between the per node storage $\alpha$ at each node and repair bandwidth $d\beta$. The codes that can achieve this tradeoff curve are called \emph{optimal} regenerating codes. Two extreme points on this tradeoff curve are of particular concern, namely, minimum bandwidth regenerating (MBR) point and minimum storage regenerating (MSR) point, respectively representing codes with the least repair bandwidth and ones with the least per node storage. As shown in \cite{Re:A.Dimakis}, the parameters of MBR and MSR codes are given by:
\begin{equation}\label{optimal}
\left\{\begin{aligned}
&(\alpha_{\textbf{MSR}},\beta_{\textbf{MSR}})=(\frac{B}{k}, \frac{B}{k(d-k+1)})\\
&(\alpha_{\textbf{MBR}},\beta_{\textbf{MBR}})=(\frac{2dB}{k(2d-k+1)}, \frac{2B}{k(2d-k+1)})
\end{aligned}\right.
\end{equation}
In the literature, three repair models are considered: functional repair, exact repair, and exact repair of systematic nodes \cite{Re:A.G.Dimakis1}. Exact repair can regenerate the exact replicas of the lost data in the failed nodes and thus is preferred in practical systems \cite{Re:C.Huang}. In the exact repair scenario, Shah et al in \cite{Re:Shah.N.B} demonstrate that most interior points on the storage-bandwidth tradeoff curve are not achievable. For those possibly reachable interior points, constructions of codes are rare \cite{Re:C. Tian,Re:T. Ernvall}. In addition, Duursma in \cite{Re:Duursma1,Re:Duursma2} derive some new outer bounds for regenerating codes with exact repair.

Up to now, several constructions with the exact repair property for MBR and MSR codes have been proposed. In \cite{Re:K.Rashmi}, Rashmi et al employ product matrix to construct MBR codes for all parameters and MSR codes with $\{d\geq 2k-2\}$. In the MSR scenario, significant progress have been made. From the overall perspective, there are two classes of MSR codes, i.e., the scalar MSR codes with $\{\beta=1\}$ \cite{Re:K.Rashmi,Re:C.Suh,Re:Y.Wu1,Re:K.V.Rashmi,Re:Rashmi,Re:N. B. Shah} and vector MSR codes with $\{\beta=(n-k)^{x}\}$ where $x\geq 1$ \cite{Re:Z. Wang,Re:D. S. Papailiopoulos,Re:I.Tamo,Re:Z. Wang1,Re:V. R. Cadambe,Re:V. R. Cadambe1,Re:G. K. Agarwal,Re:Y.S.Han,Re:J.Li}. Many of these constructions are established on interference alignment. As explained in \cite{Re:N. B. Shah}, interference alignment is the necessity of constructing linear scalar MSR codes and these linear scalar MSR codes only exist when $d\geq 2k-2$. From another point of view, this existing restriction exactly corresponds to the low rate regime, i.e., $\frac{k}{n}\leq \frac{1}{2n}+\frac{1}{2}$. As for the high rate codes with $\{\frac{k}{n}>\frac{1}{2}\}$, vector MSR codes are available as they are free from the parameter constraints of $(n,k)$. However, many of these vector codes only allow efficient repair of systematic nodes \cite{Re:I.Tamo,Re:Z. Wang1,Re:V. R. Cadambe,Re:V. R. Cadambe1,Re:G. K. Agarwal,Re:Y.S.Han,Re:J.Li}, such as Zigzag codes \cite{Re:I.Tamo}. In \cite{Re:Z. Wang,Re:D. S. Papailiopoulos}, the authors present vector MSR codes allowing efficient repair for parity nodes as well, where the code given in \cite{Re:Z. Wang} is a variant of Zigzag code.

In addition to repair efficiency, there are many other design features required by DSSs such as security
\cite{Re:S.Pawar,Re:N.B.Shah,Re:S.Goparaju,Re:Rawat.A.S,Re:Tandon,Re:Koyluoglu}, local-repairability \cite{Re:Rawat.A.S,Re:N.Prakash,Re:P.Gopalan,Re:D. S. Papailiopoulos1}, optimality of updating
\cite{Re:I.Tamo,Re:Y.S.Han,Re:J.Li}, etc. Our concern in this paper is on securing DSSs against eavesdroppers attempting to obtain any knowledge of the original data.

\subsection{Secure Regenerating Codes.}

Since the nodes of DSSs are widely spread across the network, individual nodes may be compromised and as a result the data stored is vulnerable to eavesdropping. There are mainly two kinds of attacker models considered in the literature: passive eavesdropper model and active eavesdropper model \cite{Re:H.Delfs}. Compared to the former, active eavesdropper can modify the data or even inject new data into the compromised nodes. Our eavesdropper model considered in this paper is the passive one as given in \cite{Re:N.B.Shah}. In this model, eavesdropper has access to the data stored on $l_1$ nodes as well as the repair data for an additional $l_2$ nodes. Here, we only consider the situation of exact repair\footnote{Functional repair scheme requires ceaselessly updating the data stored in nodes undergoing repair, which may leak substantial linear combinations of data to eavesdroppers and enable the eavesdroppers to retrieve the original data just by solving the linear equations. This is another reason why exact repair is superior to functional repair.}.

{\bf Related work:} The issue of designing secure regenerating codes against eavesdropping was firstly addressed in \cite{Re:S.Pawar} and \cite{Re:N.B.Shah}. The authors in \cite{Re:S.Pawar} considered the initial setting that an eavesdropper observes the contents of $l<k$ nodes of the storage system, and analyzed the regenerating code's secrecy capacity (i.e., the maximal file size that can be securely stored). An upper bound of the secrecy capacity and a secure MBR code that can attain this bound are proposed in \cite{Re:S.Pawar}. Extending the initial eavesdropper setting \cite{Re:S.Pawar}, authors in \cite{Re:N.B.Shah} modeled the eavesdropper as one obtaining access to the data stored on $l_1$ nodes as well as the repair data for an additional $l_2$ nodes, with $l_1+l_2<k$. The secure product-matrix-based MBR coding scheme proposed in \cite{Re:N.B.Shah} can achieve the bound derived in \cite{Re:S.Pawar} with $l=l_1+l_2$. Achievability of the bound for secure product-matrix-based MBR codes in \cite{Re:N.B.Shah} can be attributed to the fact that the repair bandwidth $d\beta$ equals to per node storage $\alpha$ in the MBR scenario. In other words, the $(l_1,l_2)$-eavesdropper cannot obtain any extra information other than the contents of $l=l_1+l_2$ nodes in the MBR scenario. Hence, under the $(l_1,l_2)$-eavesdropper model, the upper bound in \cite{Re:S.Pawar} still holds for the secure file size $B^{(s)}$ of MBR codes:
\begin{equation}\label{initial bound}
B^{(s)}\leq \sum_{i=l+1}^{k}\min\{\alpha,(d-i+1)\beta\},
\end{equation}
where $l=l_1+l_2$. Authors in \cite{Re:N.B.Shah} further considered the design of secure MSR codes based on product-matrix codes, but the secure MSR coding scheme is only capable of storing $(k-l_1-l_2)(\alpha-l_2\beta)$-sized secure files, which reaches the bound (\ref{initial bound}) only when $l_2=0$. The intuition here indicates that the $(l_1,l_2)$-eavesdropper can obtain more information than the contents of $(l_1+l_2)$ nodes in the MSR scenario, as the repair bandwidth $d\beta$ is larger than  $\alpha=(d-k+1)\beta$ that is the amount of data stored on each of those $l_2$ nodes. As mentioned in \cite{Re:N.B.Shah}, it was unknown yet whether such a secure MSR code is still optimal when $l_2\geq 1$. Since then, characterization of the secrecy capacity for MSR codes is considered to be open under $(l_1,l_2>0)$-eavesdropper model.

Recently, the authors in \cite{Re:S.Goparaju} and \cite{Re:Rawat.A.S} employ the technique of linear subspace intersection and then derive new upper bounds on secrecy capacity for linear MSR codes. Zigzag code \cite{Re:I.Tamo} and its variant \cite{Re:Z. Wang} are shown to achieve these bounds through pre-coding of maximum rank distance (MRD) code \cite{Re:R. M. Roth,Re:Gabidulin}. The bound given in \cite{Re:Rawat.A.S} auxiliarily implies that the product-matrix-based secure MSR code proposed in \cite{Re:N.B.Shah} is also optimal for $l_2=1$. Regarding the bound given in \cite{Re:S.Goparaju}, it is actually an extension of the one in \cite{Re:Rawat.A.S}, since the bound in \cite{Re:S.Goparaju} matches to that in \cite{Re:Rawat.A.S} when $l_2\leq2$.

In another parallel research area, towards two separate eavesdropper models with $(l_1,l_2=0)$ and $(l_1=0,l_2)$, the authors in \cite{Re:Tandon} study the secure storage-vs-repair-bandwidth tradeoff, where they respectively derive new outer bounds on secrecy capacity for a general parameter set and some specific parameter sets. Therein, they show that in the presence of $(l_1=0,l_2)$-eavesdropper, these new bounds strictly improve upon the existing cutset-based bounds presented in \cite{Re:S.Pawar} and the MBR point is the only efficient point that can achieve these specific-parameter-based bounds. Under the above background of $(l_1,l_2)$-eavesdropper model, our focus herein is dedicated to studying the secrecy capacity solely at the MSR point\footnote{It is shown in \cite{Re:Tandon} that for certain parameters, secure codes operating at the MBR points actually have better ``storage" (i.e., the maximal file size that can be securely stored, or just termed secrecy capacity) rate than codes operating at the MSR points. In this sense, it appears that secure MSR codes lose the feature of optimal storage, while the original notion of MSR codes under the non-secure setting shall be optimal in storage rate as displayed in \cite{Re:A.Dimakis}. Throughout this paper, we still use the term MSR points (or codes) to only signify the fact that $\alpha$ and $\beta$ satisfy the relationship $\alpha=(d-k+1)\beta$ like the MBR points termed in \cite{Re:Tandon} that require $\alpha=d\beta$. Essentially, each node in the secure MSR codes still stores $\alpha_{\textbf{MSR}}$ symbols and transmits $\beta_{\textbf{MSR}}$ symbols for repairing failed nodes, which just need to replace with some randomness.}.

{\bf Contributions:} In this work, we first carefully review the method of determining regenerating codes considered in \cite{Re:A.Dimakis} and the information-theoretic technique used in \cite{Re:Shah.N.B}. Therein, we find that the $\alpha$ symbols stored in any node or the $\beta$ symbols contained in any single set of repair data for the \emph{optimal} regenerating codes are in fact mutually independent and uniformly distributed inside themselves. It not only indicates that entropy of any symbol involved reaches the maximal value $1$, but also signifies that
entropy of the $\alpha$ symbols in any node and entropy of the $\beta$ symbols in any single repair data all attain the maximal-integer-value $\alpha$ and $\beta$ respectively. Thereafter, we recognize that the concepts of uniform distribution and independence between symbols in information theory \cite{Re:Cover.T.M} exactly correspond to those of permutation polynomial and orthogonal system in finite fields \cite{Re:R. Lidl} respectively. Using these two theories in finite fields, we demonstrate that the joint entropy of symbols included in multiple sets of repair data in the nonlinear context may be a non-integer value while it must be an integer in the linear context, which will be used to
investigate the secrecy capacity of linear MSR codes.

Then, we turn to study the inherent features of MSR codes from the information-theoretic perspective, where the data stored in storage nodes and transferred by helper nodes during repair are considered as random variables. Based on the basic reconstruction and regeneration properties of MSR codes with $\{n=d+1,k,d,\alpha,\beta\}$, we derive two useful properties: (i) the repair data sent from disjoint sets of nodes to a failed node are mutually independent, and (ii) given the contents of a node and the repair data from any $k-1$ nodes, the repair data from the remaining $d-k+1$ nodes are deterministic. Combining the two new properties with a universal upper bound on secrecy capacity for any \emph{optimal} regenerating code with $\{n=d+1,k,d,\alpha,\beta\}$, we derive a simple and generally applicable upper bound on secrecy capacity for any MSR code with $\{n=d+1,k,d,\alpha,\beta\}$. As for the MSR codes with $\{n>d+1,k,d,\alpha,\beta\}$, we introduce a new concept of ``\emph{stable}" MSR codes, which require that repair data transmitted from any node $i$ to any failed node $j$ is independent of the choice of the set of helper nodes including the same node $i$. Therein, we show this stable property is the equivalent condition of secrecy capacity between any MSR code with $\{n>d+1,k,d,\alpha,\beta\}$ and its truncated one with $\{n=d+1,k,d,\alpha,\beta\}$. It should be noted that the product-matrix-based MSR code given in \cite{Re:K.Rashmi} is a \emph{stable} MSR code.

Finally, we converge back to the linear MSR codes with parameter set $\{n=d+1,k,d,\alpha,\beta\}$, where those aforementioned upper bounds on secrecy capacity actually can always be achieved through the pre-coding of maximum rank distance (MRD) code \cite{Re:R. M. Roth,Re:Gabidulin} as applied in \cite{Re:Rawat.A.S,Re:Koyluoglu}. Based on the fact that joint entropy of multiple sets of repair data is an integer, we fully characterize the secrecy capacity of linear MSR codes in the category where $1\leq\beta<\frac{d-k+1}{l_2-1}$. A consequence of this result when $\beta =1$ naturally establishes the optimality of product-matrix-based secure MSR codes whenever $l_1+l_2\leq k-1$ and $l_2\leq \min\{k-1,d-k+1\}$, which completely resolves the question raised in \cite{Re:N.B.Shah}. Note that product-matrix-based MSR code given in \cite{Re:K.Rashmi} is a scalar MSR code, i.e., it is built on $\beta=1$. In the other category where $\beta\geq\frac{d-k+1}{l_2-1}$, we give new upper bounds on secrecy capacity, which are in fact improved generalization of the results given in \cite{Re:S.Goparaju,Re:Rawat.A.S}. Thereafter, we find that all the aforementioned results also apply to systematic MSR codes with only repair data of systematic nodes eavesdropped. By putting all together, we eventually present a comprehensive and explicit result on secrecy capacity for linear MSR codes, which closely depends on the value of $\beta$. This final outcome also predicts certain results on secrecy capacity for some unexplored linear MSR codes. As an illustration and comparison, Table \ref{Tab:BoundComparision} summaries the study progresses on secrecy capacity for linear MSR codes, wherein it should be noted that the bound in \cite{Re:Tandon} cannot be reached for MSR codes.

\begin{table*}[ht]
 \tabcolsep 0pt
\begin{center}
\def\temptablewidth{0.961\textwidth}
\caption{Secrecy Capacity of Linear MSR Codes under $(l_1,l_2)$-Eavesdropper Model}\label{Tab:BoundComparision}
{\rule{\temptablewidth}{1pt}}
\begin{tabular}{|c|c|}
\quad Citation        \quad     &  \quad Corresponding Results   \quad     \\ \hline
\quad Pawar et al\cite{Re:S.Pawar}  \quad     &  \quad $B^{(s)}\leq (k-l_1-l_2)\alpha$, optimal only when $l_2=0$ \quad     \\ \hline
\quad Tandon et al    \cite{Re:Tandon} \quad     &  \quad $B^{(s)}\leq(k-l_2)(1-\frac{1}{d})\alpha$, for $n=d+1$, $l_1=0$ and $1\leq l_2<k$ \quad     \\ \hline
\quad Shah et al    \cite{Re:N.B.Shah} \quad     &  \quad $B^{(s)}=(k-l_1-l_2)(\alpha-l_2\beta)$, for product-matrix-based MSR codes \quad     \\ \hline
\quad Rawat et al\cite{Re:Rawat.A.S}    \quad       & \quad  $B^{(s)}\leq (k-l_1-l_2)\big(\alpha-\theta(\beta,l_2)\big)$, where
$\theta(\beta,l_2)=\left\{\begin{aligned}
&\beta, \quad for\quad l_2=1\\
&2\beta-\frac{\beta}{d+1-k}, \quad for\quad l_2=2
\end{aligned}\right.$                       \quad      \\ \hline
\quad Goparaju et al\cite{Re:S.Goparaju}    \quad   & \quad  $B^{(s)}\leq (k-l_1-l_2)(1-\frac{1}{n-k})^{l_2}\alpha$, where $n=d+1$  \quad                         \\ \hline
\quad      & \quad  $\underbrace{B^{(s)}= (k-l_1-l_2)\big(\alpha-\pi(\beta,l_2)\big)}$, wherein\\
\quad This paper    \quad & $\pi(\beta,l_2)$:
$\left\{\begin{array}{ll}
=l_2\beta, &if \quad l_2\leq t, \beta<\frac{d-k+1}{t-1};\\
\geq t\beta+\beta(d-k-t+1)\big{[}1-(\frac{d-k}{d-k+1})^{e} \big{]}, &if \quad l_2=t+e, \frac{d-k+1}{t}\leq\beta<\frac{d-k+1}{t-1},
\end{array}\right.$\\
 & where $1\leq t\leq d-k+1$ and $e\geq 1$. This also can be referenced from our formula (\ref{explicit})   \\
\end{tabular}
{\rule{\temptablewidth}{1pt}}
\end{center}
\end{table*}

{\bf Organization:} Section 2 gives preliminaries consisting of some basic definitions in information theory, notation used in this paper, some results from theory of finite field and a universal upper bound under the $(l_1,l_2)$-eavesdropper model. Section 3 presents some new results for general MSR codes mainly including some general properties, some generally applicable upper bounds on secrecy capacity and the new concept of \emph{stable} MSR codes. Section 4 exhibits the comprehensive and explicit result on secrecy capacity for linear MSR codes. Section 5 concludes this paper.

\section{PRELIMINARIES}

In this section, some basic concepts related to information theory are quoted, which will be used in high frequency later. Then, we describe the system model of MSR codes from the information-theoretic perspective. Subsequently, we introduce the theory on permutation polynomial in finite fields, which can be regarded as a new way to understand the construction of \emph{optimal} regenerating codes. At last, we present a universal upper bound on secrecy capacity under the $(l_1,l_2)$-eavesdropper model.

\subsection{Information Entropy}

\begin{definition}\cite{Re:Cover.T.M}\label{entropy}(Entropy of A Random Variable $X$): The entropy of a discrete random variable $X$ with probability distribution $p_X(x)$ is
\begin{equation}
H(X)=-\sum_x{p(x)\log{p(x)}}.
\end{equation}
The entropy measures the expected uncertainty in $X$. It must be that $H(X)\geq0$, meaning entropy is always non-negative and $H(X) = 0$ iff $X$ is deterministic. In addition, when $X$ is uniformly distributed (i.e., $p(x)=\frac{1}{q}$ where $q$ is the total number of the events of $X$), $H(X)$ achieves the maximum value $\log q$. Normally, the base of logarithm can be specified to $q$. In this case, it must be that $H(X)\leq 1$ and $H(X)=1$ iff $X$ is uniformly distributed.
\end{definition}

\begin{definition}\cite{Re:Cover.T.M}\label{joint}(Joint Entropy and Conditional Entropy): Joint entropy between two random variables $X$ and $Y$, and conditional entropy of $Y$ given a random variable $X$ are respectively
\begin{equation}
\left\{\begin{aligned}
&H(X,Y)=-E_{p(x,y)}[\log{p(X,Y)}]=-\sum_x\sum_y{p(x,y)\log{p(x,y)}}\\
&H(Y|X)=-E_{p(x,y)}[log{p(Y|X)}]=-\sum_xp(x)H(Y|X=x)
\end{aligned}\right.
\end{equation}
Besides, joint and conditional entropy provide a natural calculus: $H(X,Y)=H(X)+H(Y|X)$.

\end{definition}

\begin{definition}\cite{Re:Cover.T.M}(Mutual Information and Conditional Mutual Information): The mutual information between $X$ and $Y$, and the conditional mutual information between $X$ and $Y$ given another random variable $Z$ are respectively given by:
\begin{equation}
\left\{\begin{aligned}
&I(X;Y)=H(X)-H(X|Y)\\
&I(X;Y|Z)=H(X|Z)-H(X|Y,Z)\\
\end{aligned}\right.
\end{equation}
\end{definition}

\begin{definition}\cite{Re:Cover.T.M}(Chain Rules): Chain rules for entropy and mutual information are:
\begin{equation}
\left\{\begin{aligned}
&H(X_1,\cdots,X_n)=\sum_{i=1}^nH(X_i|X_{i-1},\cdots,X_1)\\
&I(X_1,\cdots,X_n;Y)=\sum_{i=1}^nI(X_i;Y|X_{i-1},X_{i-2},\cdots,X_1)
\end{aligned}\right.
\end{equation}
\end{definition}

\begin{lemma}\label{helpful}
Based on these definitions of information entropy, we naturally have
\begin{equation}
I(X;Y|Z)=I(Y;X|Z)\leq \min\{H(X),H(Y)\}.
\end{equation}
\end{lemma}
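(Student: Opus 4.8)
The plan is to read off the equality directly from the chain rule for joint entropy and then obtain the inequality by discarding a non-negative term and using that conditioning cannot increase entropy; everything here is an elementary manipulation of the definitions in Section~2.1, so no deep machinery is needed.

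First I would establish the symmetry $I(X;Y|Z)=I(Y;X|Z)$. Expanding $H(X,Y,Z)$ in two different orders by iterating the calculus $H(X,Y)=H(X)+H(Y|X)$ from Definition~\ref{joint} gives
\begin{equation}
H(Z)+H(X|Z)+H(Y|X,Z)=H(X,Y,Z)=H(Z)+H(Y|Z)+H(X|Y,Z),
\end{equation}
and cancelling $H(Z)$ yields $H(X|Z)-H(X|Y,Z)=H(Y|Z)-H(Y|X,Z)$, which is precisely $I(X;Y|Z)=I(Y;X|Z)$ by the definition of conditional mutual information.

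Next I would bound this common value by $\min\{H(X),H(Y)\}$. Starting from $I(X;Y|Z)=H(X|Z)-H(X|Y,Z)$, observe that $H(X|Y,Z)$ is a weighted average $\sum_{y,z}p(y,z)H(X|Y=y,Z=z)$ of ordinary entropies and hence non-negative, so $I(X;Y|Z)\leq H(X|Z)$. I would then invoke that conditioning reduces entropy, $H(X|Z)\leq H(X)$, to get $I(X;Y|Z)\leq H(X)$; applying the same two steps to the symmetric form $I(Y;X|Z)$ obtained above gives $I(X;Y|Z)\leq H(Y)$, and combining the two bounds proves the claim.

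I do not expect any real obstacle. The only point worth flagging is that the bound is stated with the unconditional entropies $H(X)$ and $H(Y)$ rather than the smaller conditional ones $H(X|Z),H(Y|Z)$, so the sole ingredient beyond the stated definitions is the monotonicity $H(X|Z)\leq H(X)$ — equivalently $I(X;Z)\geq 0$ — which can be cited from \cite{Re:Cover.T.M} or derived in one line from Jensen's inequality applied to $-\log$.
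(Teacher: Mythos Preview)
Your argument is correct: the symmetry follows from the two chain-rule expansions of $H(X,Y,Z)$, and the bound follows from $H(X|Y,Z)\geq 0$ together with $H(X|Z)\leq H(X)$. The paper does not actually supply a proof of this lemma --- it is stated as an immediate consequence of the definitions --- so your elementary derivation is exactly the kind of justification the authors are implicitly invoking.
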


\subsection{Notation}

We follow the information-theoretic approach introduced in \cite{Re:Shah.N.B} and accordingly treat all data symbols including data stored at the storage nodes and those transferred by helper nodes during the repair operations as random variables.

\begin{no}
Throughout the paper, we mainly consider the situation of MSR code with parameter set $\{n=d+1,k,d,\alpha,\beta\}$, because any upper bound on the data file that can be securely stored for any secure MSR code with $\{n=d+1,k,d,\alpha,\beta\}$ also holds for the corresponding secure MSR code with $\{n>d+1,k,d,\alpha,\beta\}$. In Section 3.3, we will establish the equivalent condition of secrecy capacity between any MSR code with $\{n>d+1,k,d,\alpha,\beta\}$ and its truncated one with $\{n=d+1,k,d,\alpha,\beta\}$.
\end{no}

We represent nodes using indices $1$ to $n$ and denote the sequence of nodes $[i,i+1,\cdots,j]$ by $[i,j]$, where $i<j$. We use symbols for a set $\{\ldots\}$ and a sequence $[\ldots]$ interchangeably. For any regenerating code with parameter set $\{n=d+1,k,d,\alpha,\beta\}$, we let

$\blacksquare$ (1). $W_i, i\in[1,d+1]$ denote the random variable corresponding to the content of node $i$. As proved in \cite{Re:Shah.N.B}, it must be that $H(W_i)=\alpha$ for any \emph{optimal} regenerating code including MSR codes.

$\blacksquare$ (2). $\{W_A, A\subseteq [1,d+1]\}$ denote the set of random variables corresponding to the nodes in the subset $A$. Throughout the paper, subscripts of $W$ can represent either a node index or a set of nodes which will be clear from the context.

$\blacksquare$ (3). $S_i^j,\{i,j\}\in[1,d+1],i\neq j$ denote the random variable corresponding to the data symbols sent by the helper node $i$ to the replacement of the failed node $j$. It must be that $H(S_i^j)=\beta$ for any \emph{optimal} regenerating code including MSR codes, following from \cite{Re:Shah.N.B}.

$\blacksquare$ (4). $S_A^B$ denote the set $\{S_i^j|i\in A,j\in B,i\neq j,A\subseteq [1,d+1],B\subseteq [1,d+1]\}$, and particularly $S^B$ substitutes for $S_{[1,d+1]}^B$.

According to the above notation, reconstruction as well as regeneration property of any regenerating code can be expressed as
\begin{equation}
\left\{\begin{aligned}
&H(W_{i_1},W_{i_2},\cdots,W_{i_k})=k\alpha, \quad i_j\in[1,d+1], j\in \{1,\ldots,k\}\\
&H(W_i|S_{\{[1,d+1]\setminus i\}}^i)=0, \quad i\in[1,d+1]
\end{aligned}\right.
\end{equation}

\subsection{Permutation Polynomials}

As shown in \cite{Re:A.Dimakis}, $\alpha$ represents the number of symbols stored in each node and $\beta$ corresponds to the number of symbols downloaded from a surviving node to repair a failed node. Note that the entropy of each symbol cannot be greater than 1 and may not be an integer. Thus, it can only be that $H(W_i)\leq \alpha$ and $H(S_i^j)\leq \beta$. Subsequently, under the context of \emph{optimal} regenerating codes, Shah et al in \cite{Re:Shah.N.B} employ information theory to derive that $H(W_i)=\alpha$ and $H(S_i^j)=\beta$, which implies that each symbol contained in each node and repair data actually reaches the maximum entropy $1$, i.e., each symbol is uniformly distributed inside itself. Besides, it also means that the symbols included in the same node $i$ and same repair data $S_i^j$ are mutually independent respectively. Although each symbol included in any repair data $S_i^j$ has the uniform distribution and $S_i^j$ also has the maximal entropy $\beta$, the joint entropy $H(S_i^{j_1},S_i^{j_2})$ may not be an integer where $j_1\neq j_2$ as illustrated in the following.

We let $(y_1^i,y_2^i,\cdots,y_{\alpha}^i)$ denote the $\alpha$ symbols stored in node $i$, where $H(y_l^i)=1$ for any $l\in[1,\alpha]$. In addition, we let $(z_1^{(i,j_1)},z_2^{(i,j_1)},\cdots,z_{\beta}^{(i,j_1)})$ and $(z_1^{(i,j_2)},z_2^{(i,j_2)},\cdots,z_{\beta}^{(i,j_2)})$ be the $\beta$ symbols contained in the repair data $S_i^{j_1}$ and $S_i^{j_2}$ respectively, where $H(z_l^{(i,j_1)})=H(z_l^{(i,j_2)})=1$ for $l\in [1,\beta]$. Now consider the joint entropy
\begin{equation}
H(S_i^{j_1},S_i^{j_2})=H\big(z_1^{(i,j_1)},z_2^{(i,j_1)},\cdots,z_{\beta}^{(i,j_1)},z_1^{(i,j_2)},z_2^{(i,j_2)},\cdots,z_{\beta}^{(i,j_2)}\big). \end{equation}

In a finite field $\mathbb{F}_q$, any mapping $\tau: \mathbb{F}_q^{n}\rightarrow\mathbb{F}_q$ can be represented by a polynomial over $\mathbb{F}_q$ of degree $<q$ in each ``indeterminate" through Lagrange Interpolation \cite{Re:R. Lidl}. Since all the symbols contained in node $i$ are uniformly distributed inside themselves and mutually independent, they can be regarded as ``indeterminates". So, we let
\begin{equation}
\left\{\begin{aligned}
&z_{1}^{(i,j_1)}=\dot{f}_1(y_1^i,y_2^i,\cdots,y_{\alpha}^i)\\
&z_{2}^{(i,j_1)}=\dot{f}_2(y_1^i,y_2^i,\cdots,y_{\alpha}^i)\\
&\quad \quad\quad\vdots\\
&z_{\beta}^{(i,j_1)}=\dot{f}_{\beta}(y_1^i,y_2^i,\cdots,y_{\alpha}^i)
\end{aligned}\right.
\end{equation}
and
\begin{equation}
\left\{\begin{aligned}
&z_{1}^{(i,j_2)}=\ddot{f}_1(y_1^i,y_2^i,\cdots,y_{\alpha}^i)\\
&z_{2}^{(i,j_2)}=\ddot{f}_2(y_1^i,y_2^i,\cdots,y_{\alpha}^i)\\
&\quad \quad\quad\vdots\\
&z_{\beta}^{(i,j_2)}=\ddot{f}_{\beta}(y_1^i,y_2^i,\cdots,y_{\alpha}^i),
\end{aligned}\right.
\end{equation}
where $(\dot{f}_1,\dot{f}_2,\cdots,\dot{f}_{\beta})$ and $(\ddot{f}_1,\ddot{f}_2,\cdots,\ddot{f}_{\beta})$ represent the polynomials induced by the symbols contained in repair data $S_i^{j_1}$ and $S_i^{j_2}$ respectively. In \cite{Re:R. Lidl}, there are two special concepts introduced as follows.

\begin{definition}\cite{Re:R. Lidl}\label{permutation1}(Permutation Polynomial):
A polynomial $f\in \mathbb{F}_q[x_1,\cdots,x_n]$ is called a permutation polynomial in $n$ indeterminates over $\mathbb{F}_q$ if the equation $f(x_1,\cdots,x_n)=a$ has $q^{n-1}$ solutions in $\mathbb{F}_q^n$ for each $a\in \mathbb{F}_q$.
\end{definition}

According to Definition \ref{permutation1}, we know that each value $a\in\mathbb{F}_q$ will be taken in the same probability ($\frac{q^{n-1}}{q^{n}}=\frac{1}{q}$) by a permutation polynomial. From this point, permutation polynomial exactly corresponds to uniform distribution in information theory (Definition \ref{entropy}). Due to that $H(z_l^{(i,j_1)})=H(z_l^{(i,j_2)})=1$ for any $l\in [1,\beta]$, we know that $(\dot{f}_1,\dot{f}_2,\cdots,\dot{f}_{\beta},\ddot{f}_1,\ddot{f}_2,\cdots,\ddot{f}_{\beta})$ all are permutation polynomials. Here, it should be noted that permutation polynomials are not necessarily linear polynomials in finite fields while linear polynomials apparently are permutation polynomials.

\begin{definition}\cite{Re:R. Lidl}\label{permutation}(Orthogonal System):
A system of polynomials $f_1,\cdots,f_m\in \mathbb{F}_q[x_1,\cdots,x_n]$ where $1\leq m\leq n$ is said to be orthogonal in $\mathbb{F}_q$, if the system of equations
\begin{equation}
\left\{\begin{aligned}
&f_1(x_1,\cdots,x_n)=a_1\\
&\quad \quad\quad\vdots\\
&f_m(x_1,\cdots,x_n)=a_m
\end{aligned}\right.
\end{equation}
has $q^{n-m}$ solutions in $\mathbb{F}_q^n$ for each $(a_1,\cdots,a_m)\in \mathbb{F}_q^m$.
\end{definition}

According to Definition \ref{permutation} and Definition \ref{joint}, we know that $(\dot{f}_1,\dot{f}_2,\cdots,\dot{f}_{\beta})$ and $(\ddot{f}_1,\ddot{f}_2,\cdots,\ddot{f}_{\beta})$ respectively constitute two orthogonal systems, since $H(S_i^{j_1})=H(S_i^{j_2})=\beta$. Similarly, it follows that $H(S_i^{j_1},S_i^{j_2})=2\beta$ if and only if the $2\beta$ polynomials $(\dot{f}_1,\dot{f}_2,\cdots,\dot{f}_{\beta},\ddot{f}_1,\ddot{f}_2,\cdots,\ddot{f}_{\beta})$ can form an orthogonal system.

However, if there exist two different polynomials $\dot{f}_{l_1}$ and $\ddot{f}_{l_2}$ for some $l_1,l_2\in[1,\beta]$ that cannot form an orthogonal system, the joint entropy of the corresponding symbols $H(z_{l_1}^{(i,j_1)},z_{l_2}^{(i,j_2)})$ will be a non-integer, which may result in that all the symbols of repair data $S_i^{\{j_1,j_2\}}$ also have the non-integer joint entropy. Note that multiple permutation polynomials may not form an orthogonal system while each polynomial in an orthogonal system must be a permutation polynomial.

\vspace{0.2cm}
Nevertheless, in the linear context, the joint entropy of the symbols contained in $S_i^A$ must be an integer, where $i\notin A$ and $A$ is any subset of $[1,d+1]$.

\begin{lemma}\label{integer}
In the scenario of linear optimal regenerating codes, the symbols contained in $S_i^A$ must have integer-value entropy, where $i\in[1,d+1]$, $A\subset [1,d+1]$ and $i\notin A$.
\end{lemma}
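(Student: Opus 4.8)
The plan is to exploit the single structural fact that distinguishes the linear setting, namely that in a linear optimal regenerating code every repair symbol transmitted by a helper node is an $\mathbb{F}_q$-linear combination of the $\alpha$ symbols physically stored at that node. Concretely, keeping the notation $(y_1^i,\dots,y_\alpha^i)$ for the contents of node $i$, for each $j\in A$ and each $l\in[1,\beta]$ one has $z_l^{(i,j)}=\sum_{m=1}^{\alpha}c_{l,m}^{(i,j)}\,y_m^i$ for suitable coefficients $c_{l,m}^{(i,j)}\in\mathbb{F}_q$; equivalently, the polynomials $\dot f_l,\ddot f_l,\dots$ induced by the various repair data $S_i^j$, $j\in A$, are now all \emph{linear} forms in the indeterminates $y_1^i,\dots,y_\alpha^i$. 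Stacking all $|A|\beta$ of these linear forms produces a matrix $M\in\mathbb{F}_q^{|A|\beta\times\alpha}$ for which the random vector of symbols contained in $S_i^A$ equals $M\,(y_1^i,\dots,y_\alpha^i)^{\top}$.

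Next I would feed in the structural facts about $W_i$ recalled earlier in the excerpt: from $H(W_i)=\alpha$ together with the uniformity and mutual independence of the symbols inside a node, the vector $(y_1^i,\dots,y_\alpha^i)$ is uniformly distributed over $\mathbb{F}_q^\alpha$. For a linear map $y\mapsto My$ a uniform input on $\mathbb{F}_q^\alpha$ yields an output that is uniform on the image subspace $\mathrm{Im}(M)$: each fibre $M^{-1}(a)$ with $a\in\mathrm{Im}(M)$ is a coset of $\ker M$ and hence has exactly $q^{\alpha-\mathrm{rank}(M)}$ elements, so every point of $\mathrm{Im}(M)$ is attained with probability $q^{-\mathrm{rank}(M)}$. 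This is precisely the linear-algebra content behind Definition \ref{permutation}: a system of linear forms has $q^{\alpha-r}$ solutions for every attainable right-hand side, where $r$ is its rank, so in the linear case the orthogonality question reduces to linear independence. Consequently the symbols of $S_i^A$ are uniformly distributed over a set of size $q^{\mathrm{rank}(M)}$, and with the base-$q$ logarithm convention of Definition \ref{entropy} this gives
\begin{equation}
H(S_i^A)=\mathrm{rank}(M)\in\{0,1,\dots,\alpha\},
\end{equation}
which is the asserted integrality.

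I do not expect a serious obstacle here; the only point that needs care is the justification of the opening reduction, i.e.\ that ``linear optimal regenerating code'' forces each repair symbol to be a linear function of the \emph{stored} symbols of the helper node, rather than merely of the original file. This is immediate from the definition of a linear repair scheme — the repair matrices act on node contents — but I would spell it out so that the subsequent rank argument is unambiguous. The same computation also makes transparent why the non-linear case can fail, as noted in the excerpt: a general permutation polynomial need not be an affine map, so its fibres need not all have equal cardinality, and the induced distribution on the joint symbol vector of $S_i^{\{j_1,j_2\}}$ need not be uniform over a set whose size is a power of $q$, leaving room for a non-integer joint entropy.
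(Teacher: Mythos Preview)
Your proposal is correct and follows essentially the same approach as the paper: both represent the symbols of $S_i^A$ as a linear map $C$ (your $M$) applied to the uniformly distributed node contents $(y_1^i,\dots,y_\alpha^i)$, and both conclude $H(S_i^A)=\mathrm{rank}(C)$. The only cosmetic difference is that the paper splits into the cases $r(C)=m$ and $r(C)<m$ and phrases each via the orthogonal-system language of Definition~\ref{permutation}, whereas you handle both at once with the coset/fibre count on $\ker M$; the substance is identical.
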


\begin{proof}
Assume all the $m=|A|\cdot \beta$ symbols in $S_i^A$ can be represented as
\begin{equation}
\{f_1(x_1,x_2,\cdots,x_{\alpha}),f_2(x_1,x_2,\cdots,x_{\alpha}),\cdots,f_{m}(x_1,x_2,\cdots,x_{\alpha})\},
\end{equation}
where $(x_1,x_2,\cdots,x_{\alpha})$ are the $\alpha$ symbols stored in node $i$ and $f_l$ denotes the linear polynomial for $l\in[1,m]$. Then, we let
\begin{equation}\label{gen}
\left\{\begin{aligned}
&f_1(x_1,x_2,\cdots,x_{\alpha})=a_{11}x_1+a_{12}x_2+\cdots+a_{1\alpha}x_{\alpha}\\
&f_2(x_1,x_2,\cdots,x_{\alpha})=a_{21}x_1+a_{22}x_2+\cdots+a_{2\alpha}x_{\alpha}\\
&\quad \quad\quad\quad\quad\quad\quad\quad\quad\quad\vdots\\
&f_m(x_1,x_2,\cdots,x_{\alpha})=a_{m1}x_1+a_{m2}x_2+\cdots+a_{m\alpha}x_{\alpha},
\end{aligned}\right.
\end{equation}
where all the coefficients are drawn from $\mathbb{F}_q$. Equation (\ref{gen}) can be alternatively expressed as
\begin{equation}
(f_1,f_2,\cdots,f_m)^T=C\cdot (x_1,x_2,\cdots,x_{\alpha})^T,
\end{equation}
where $T$ indicates the transpose operation and $C$ denotes the generator matrix.

Since $(f_1,f_2,\cdots,f_m)$ are linear combinations of a set of uniformly distributed random variables, then they all are uniformly distributed and they are either mutually independent, or some of them are determined by the remaining of them. In fact, the value of $H(S_i^A)$ is equal to the rank of $C$, which we denote by $r(C)$.
\vspace{0.2cm}

1. When $m\leq \alpha$ and $r(C)=m$, the row vectors of $C$ are linearly independent. Then, for each vector value $(b_1,b_2,\cdots,b_m)\in\mathbb{F}_q^m$, equation (\ref{gen}) has $q^{\alpha-m}$ solutions in $\mathbb{F}_q^{\alpha}$. Thus, each vector value $(b_1,b_2,\cdots,b_m)$ will occur in equally probability $\frac{q^{\alpha-m}}{q^{\alpha}}=\frac{1}{q^m}$. So, the polynomials (\ref{gen}) form an orthogonal system. In this case, we can calculate that $H(S_i^A)=m=r(C)$ according to Definition \ref{joint}.

2. When $r(C)<m$, the row vectors of $C$ are not linearly independent, which implies $r(C)$ chosen linearly independent polynomials $f_l$ will determine the values of the remaining $m-r(C)$ polynomials. Although the whole polynomials (\ref{gen}) cannot form the orthogonal system, the $r(C)$ linearly independent polynomials still forms an orthogonal system. Similar to the above case, entropy of these $r(C)$ linearly independent polynomials is equal to $r(C)$. Thereby, we have
\begin{equation}
H(S_i^A)=H(f_1,f_2,\cdots,f_m)=H(f_{l_1},f_{l_2},\cdots,f_{l_{r(C)}})=r(C),
\end{equation}
where $\big\{f_{l_1},f_{l_2},\cdots,f_{l_{r(C)}}\big\}$ are the $r(C)$ chosen linearly independent polynomials.

\vspace{0.2cm}
Hence, both cases indicate that $H(S_i^A)=r(C)$, while $r(C)$ must be an integer since it represents the rank of $C$.

\end{proof}

\begin{rem}
In this lemma, theories of permutation polynomial and orthogonal system are used to demonstrate that the joint entropy
of symbols included in multiple sets of repair data in the nonlinear context may be a non-integer value while their joint entropy has to be an integer in the linear context, which is important for the later discussion on secrecy capacity of linear MSR codes.

Additionally, it is of independent interest that these two theories in finite fields are also applicable to the nonlinear context, because they may be utilized to explore the case of constructing nonlinear \emph{optimal} regenerating codes. That is beyond the scope of this paper though. In this paper, we mainly study the secrecy capacity of linear MSR codes, while some new insights on general MSR codes are also present.
\end{rem}

\subsection{A Universal Upper Bound}
\vspace{0.2cm}

$\blacktriangleright$\emph{\textbf{Eavesdropper Model:}} Let $E$ be a set of $l_1$ nodes which the eavesdropper has access to, and $F$ be another disjoint set of $l_2$ nodes whose repair data can be observed by the eavesdropper. In other words, the eavesdropper is assumed to have the knowledge of $\{W_E,S^F\}$. Furthermore, we assume $l_1+l_2<k$, otherwise the eavesdropper can retrieve all the data message. Due to this eavesdropper model, we set $G$ to be another subset $G\subseteq\big\{[1,d+1]\setminus (E\cup F)\big\}$ of size $(k-l_1-l_2)$. Based on this model, a universal upper bound on the secrecy capacity of any \emph{optimal} regenerating code is given as follows.

\begin{lemma}\label{secure expression}
For any secure optimal regenerating code with $\{n=d+1,k,d,\alpha,\beta\}$, we have
\begin{equation}
\left\{\begin{aligned}
&B^{(s)}\\
&\leq H(W_E,W_F,W_G|W_E,S^F)\\
&=H(W_G|W_E,W_F)-H(S^F|W_E,W_F)\\
&=\sum_{i=l_1+l_2+1}^{k}\min\{\alpha,(d-i+1)\beta\}-H(S^F|W_E,W_F)
\end{aligned}\right.
\end{equation}
\end{lemma}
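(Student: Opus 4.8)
The plan is to treat the secure code as a regenerating code encoding a $B$-symbol super-message $D=(M,R)$, with $M$ the secret of size $B^{(s)}=H(M)$, $R$ the auxiliary randomness, and $D$ uniform so that $H(D)=B$; every block $W_i$ and every repair packet $S_i^j$ are deterministic functions of $D$ (indeed $S_i^j$ is already a deterministic function of $W_i$ alone, by the repair protocol). Two facts will be used throughout: (a) \emph{perfect secrecy}, $H(M|W_E,S^F)=H(M)$; and (b) \emph{reconstruction}, for every node set $A$ with $|A|=k$ one has $H(D|W_A)=0$, and therefore $H(M|W_A)=0$ and $H(S^F|W_A)=0$. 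Since $E,F,G$ are pairwise disjoint with $|E\cup F\cup G|=k$, fact (b) applies with $A=E\cup F\cup G$.

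First I would prove the inequality. From (a), together with $H(M|W_E,W_F,W_G,S^F)=0$ (which follows from (b)),
\begin{equation}
B^{(s)}=H(M)=H(M|W_E,S^F)=I(M;W_F,W_G|W_E,S^F)\le H(W_F,W_G|W_E,S^F)=H(W_E,W_F,W_G|W_E,S^F),
\end{equation}
the last equality merely re-inserting the already-conditioned variable $W_E$.

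Next, the first displayed equality. Here the hypothesis $n=d+1$ is essential: the helper set for repairing any $j\in F$ is exactly $[1,d+1]\setminus j$, so regeneration gives $H(W_j|S^{\{j\}})=0$, hence $H(W_F|S^F)=0$; thus $W_F$ is determined by $S^F$ and $H(W_E,W_F,W_G|W_E,S^F)=H(W_G|W_E,S^F)=H(W_G|W_E,W_F,S^F)$. Expanding this last quantity by the chain rule in two ways gives
\begin{equation}
H(W_G|W_E,W_F,S^F)=H(W_G|W_E,W_F)+H(S^F|W_E,W_F,W_G)-H(S^F|W_E,W_F),
\end{equation}
and the middle term vanishes by fact (b) with $A=E\cup F\cup G$, leaving $H(W_G|W_E,W_F)-H(S^F|W_E,W_F)$ as claimed.

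It remains to evaluate $H(W_G|W_E,W_F)=\sum_{i=l_1+l_2+1}^{k}\min\{\alpha,(d-i+1)\beta\}$, and this is where the \emph{optimality} of the code is used. Order the $k$ nodes of $E\cup F\cup G$ as $a_1,\cdots,a_k$ so that $E\cup F$ occupies the first $l_1+l_2$ positions. The information-flow-graph argument of \cite{Re:A.Dimakis}, as sharpened in \cite{Re:Shah.N.B}, yields for \emph{any} such ordering the per-step bound $H(W_{a_i}|W_{a_1},\cdots,W_{a_{i-1}})\le\min\{\alpha,(d-i+1)\beta\}$; since $H(W_{a_1},\cdots,W_{a_k})=H(D)=B=\sum_{i=1}^{k}\min\{\alpha,(d-i+1)\beta\}$ (because $W_{E\cup F\cup G}$ both determines and is determined by $D$), all these bounds must be tight, and summing the resulting identities over $i=l_1+l_2+1,\cdots,k$ gives the claim; the three-way equality then follows, and one may also note that only $\leq$ is needed in this last step for the stated bound on $B^{(s)}$. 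I expect the only delicate point to be the justification that the per-step cut-set inequality is valid for an \emph{arbitrary} ordering of the $k$ nodes (so that $E\cup F$ may be placed first); this is the standard but somewhat technical information-flow construction, and — together with the regeneration step above — it is the only place the hypothesis $n=d+1$ is genuinely needed.
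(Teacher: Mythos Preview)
Your proof is correct and follows essentially the same route as the paper's: both use the secrecy condition together with reconstruction to obtain the inequality, then use the regeneration property (that $W_F$ is determined by $S^F$ when $n=d+1$) and reconstruction (that $S^F$ is determined by $W_{E\cup F\cup G}$) to obtain the first equality via the same mutual-information/chain-rule identity, and finally invoke Property~1 of \cite{Re:Shah.N.B} for the evaluation of $H(W_G\mid W_E,W_F)$. Your write-up differs only in notation (your $M$ is the paper's $D$, your $D$ is the paper's $(D,R)$) and in spelling out the tightness-of-cut-set argument that the paper simply cites.
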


\begin{proof}

First, in secure regenerating codes \cite{Re:N.B.Shah,Re:S.Goparaju,Re:Rawat.A.S}, the random variables associated with the message can be viewed as
the tuple $(D,R)$, where $D$ corresponds to the actual data file and $R$ corresponds to the randomness added. The secure file size is $B^{(s)}=H(D)$ and the secrecy condition requires that $I(D;W_E,S^F)=0$. Thus, it must be that
\begin{equation}
\left\{\begin{aligned}
&H(D)\\
&=H(D)-I(D;W_E,S^F)\\
&=H(D|W_E,S^F)\\
&\leq H(D,R|W_E,S^F)\\
&=H(W_E,W_F,W_G|W_E,S^F),
\end{aligned}\right.
\end{equation}
where the equation in the last step follows from the reconstruction property.

Second, we have
\begin{equation}
\left\{\begin{aligned}
&H(W_G|W_E,W_F)-H(W_E,W_F,W_G|W_E,S^F)\\
&=H(W_G|W_E,W_F)-H(W_E,W_F,W_G|W_E,W_F,S^F)\\
&=H(W_G|W_E,W_F)-H(W_G|W_E,W_F,S^F)\\
&=I(W_G;S^F|W_E,W_F)\\
&=H(S^F|W_E,W_F)-H(S^F|W_E,W_F,W_G)\\
&=H(S^F|W_E,W_F),
\end{aligned}\right.
\end{equation}
where the regeneration property leads to $H(S^F)=H(S^F,W_F)$ that is used in the first step.

At last, for the \emph{optimal} regenerating codes, it follows from the proof of property 1 in \cite{Re:Shah.N.B} that

\begin{equation}
H(W_G|W_E,W_F)=\sum_{i=l_1+l_2+1}^{k}\min\{\alpha,(d-i+1)\beta\}.
\end{equation}
\end{proof}

\begin{rem}\label{secrecy achieve}
In the context of linear regenerating codes, MRD (Maximum Rank Distance) codes \cite{Re:R. M. Roth} (e.g. Gabidulin code \cite{Re:Gabidulin}) can be used to pre-code the original data of size $\{B=k\alpha\}$, which is required to consist of $\{B-H(W_E,S^F)\}$-sized actual data file $D$ and $H(W_E,S^F)$-sized random data $R$. It should be noted that $H(W_E,S^F)$ is also an integer as derived in Lemma \ref{integer}, because $\{W_E,S^F\}$ are obtained by the linear combinations of the original data message of size $B$. As shown in \cite{Re:Rawat.A.S,Re:Koyluoglu}, this kind of secure code construction always can meet the \emph{secrecy condition} that $I(D;W_E,S^F)=0$. It exactly means the maximal file size that can be securely stored is
\begin{equation}
B^{(s)}=B-H(W_E,S^F)= H(W_E,W_F,W_G|W_E,S^F).
\end{equation}

In the MSR scenario, it is obvious that $H(W_G|W_E,W_F)=H(W_G)=(k-l_1-l_2)\alpha$, following from property 2 in \cite{Re:Shah.N.B}. Thus, we only need to concentrate on the term $H(S^F|W_E,W_F)$ in this paper.
\end{rem}

\section{DATA SECRECY FOR GENERAL MSR CODES}

In this section, we give some general properties of MSR codes and a simple expression of upper bound on secrecy capacity, which will be leveraged throughout this paper. Afterwards, \emph{stable} MSR code as a new concept is introduced, where the stable property will be shown to be closely linked with secrecy capacity.

\subsection{Properties of General MSR Codes}

Here, we proceed to provide some new properties of general MSR codes (including the nonlinear context), which actually stem from the reconstruction and regeneration properties of MSR codes. With these properties, we can further simplify the formulation $H(S^F|W_E,W_F)$ mentioned above.

\begin{lemma}\label{conditional entropy}
In the scenario of MSR codes with parameter set $\{n=d+1,k,d,\alpha,\beta\}$, for any node $i$ with efficient repair, consider two arbitrary subsets $A'$ and $B'$ such that $\{|A'|=k-1,|B'|=d-k+1,A'\cap B'=\emptyset,A'\cup B'=[1,d+1]\setminus i\}$, it must be that
\begin{equation}\label{addition pro}
\left\{ \begin{array}{l}
H(S_{A'\cup B'}^i)=d\beta\\
H(S_{B'}^i|W_i,S_{A'}^i)=0.
\end{array} \right.
\end{equation}
\end{lemma}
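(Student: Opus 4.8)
The plan is to derive both statements purely from the two defining equalities of an MSR code recorded in the excerpt — reconstruction, $H(W_{i_1},\dots,W_{i_k})=k\alpha$, and regeneration, $H(W_i\mid S_{[1,d+1]\setminus i}^i)=0$ — together with $H(W_i)\le\alpha$, $H(S_i^j)\le\beta$ and the MSR relation $\alpha=(d-k+1)\beta$. The engine is a ``local MDS'' observation: for \emph{any} splitting $[1,d+1]\setminus i=P\sqcup Q$ with $|P|=k-1$, $|Q|=d-k+1$, consider the chain $k\alpha=H(W_i,W_P)\le H(W_i,W_P,S_Q^i)=H(W_P,S_Q^i)\le H(W_P)+H(S_Q^i)\le(k-1)\alpha+(d-k+1)\beta=k\alpha$, where the middle equality holds because $W_i$ is a function of $S_P^i\cup S_Q^i$ and $S_P^i$ is a function of $W_P$, hence $W_i$ is a function of $(W_P,S_Q^i)$. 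The chain therefore collapses to equalities, and reading them off gives $S_Q^i\perp W_P$, $H(S_Q^i)=(d-k+1)\beta$ (so the $d-k+1$ streams in $S_Q^i$ are mutually independent), and $H(S_Q^i\mid W_i,W_P)=0$.

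For the first identity I would apply this with $P=A'$, $Q=B'$. Since $S_{A'}^i$ is a deterministic function of $W_{A'}$ and $W_{A'}\perp S_{B'}^i$, the data-processing inequality gives $S_{A'}^i\perp S_{B'}^i$, so $H(S_{A'\cup B'}^i)=H(S_{A'}^i)+(d-k+1)\beta$, and it remains to show $H(S_{A'}^i)=(k-1)\beta$, i.e. the streams $\{S_a^i:a\in A'\}$ are mutually independent. I would prove this by the chain rule, peeling the streams off one at a time: write $A'=\{a_1,\dots,a_{k-1}\}$, and at step $m$ choose a splitting in which $a_m$ sits in the $Q$-block while $\{a_1,\dots,a_{m-1}\}$ sits in the $P$-block (there is room, since $m-1\le k-2<k-1=|P|$). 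Then $S_{a_m}^i$ is part of $S_Q^i$ and $S_{\{a_1,\dots,a_{m-1}\}}^i$ is a function of $W_P$, so $S_Q^i\perp W_P$ together with data processing forces $H(S_{a_m}^i\mid S_{a_1}^i,\dots,S_{a_{m-1}}^i)=\beta$. Summing over $m$ yields $H(S_{A'}^i)=(k-1)\beta$ and hence $H(S_{A'\cup B'}^i)=(k-1)\beta+(d-k+1)\beta=d\beta$.

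For the second identity, I would first note that $I(W_i;S_{A'}^i)\le I(W_i;W_{A'})=0$: the $k$ nodes $\{i\}\cup A'$ obey reconstruction, so $H(W_i,W_{A'})=k\alpha$, and with $H(W_i)\le\alpha$, $H(W_{A'})\le(k-1)\alpha$ this forces $I(W_i;W_{A'})=0$, while $S_{A'}^i$ is a function of $W_{A'}$. Hence $H(W_i,S_{A'}^i)=\alpha+(k-1)\beta=d\beta$. Since $W_i$ is a function of $S_{A'\cup B'}^i$ and $S_{A'}^i\subseteq S_{A'\cup B'}^i$, the first identity gives $H(S_{A'\cup B'}^i,W_i,S_{A'}^i)=H(S_{A'\cup B'}^i)=d\beta$, so $H(S_{A'\cup B'}^i\mid W_i,S_{A'}^i)=d\beta-d\beta=0$; as $S_{B'}^i\subseteq S_{A'\cup B'}^i$, this yields $H(S_{B'}^i\mid W_i,S_{A'}^i)=0$, as required.

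The main obstacle is the step $H(S_{A'}^i)=(k-1)\beta$. The ``first-order'' collapse only directly certifies mutual independence of the $d-k+1$ streams that occupy the $Q$-block of a single fixed splitting; upgrading this to mutual independence of an arbitrary $(k-1)$-subset of the $d$ repair streams — a subset that may be strictly larger than $d-k+1$ — is exactly what requires the chain-rule-with-re-splitting argument, and it is the only genuinely non-routine point. Everything else is subadditivity, the data-processing inequality, and bookkeeping with the two defining equalities.
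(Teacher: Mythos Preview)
Your proof is correct, and the route differs in instructive ways from the paper's.

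The paper's engine is a \emph{telescoping} chain in the conditional entropy of $W_i$: starting from $H(W_i\mid S_{A'}^i)=\alpha$ (which follows from $I(W_i;W_{A'})=0$, exactly as you also argue), it peels off the $b_j$'s one at a time, bounding each increment $I(W_i;S_{b_j}^i\mid S_{A'}^i,S_{b_1}^i,\ldots,S_{b_{j-1}}^i)\le\beta$ and observing that the total drop is $\alpha=(d-k+1)\beta$, which forces every increment to equal $\beta$. This simultaneously yields $H(S_{b_j}^i\mid S_{A'}^i,S_{b_1}^i,\ldots,S_{b_{j-1}}^i)=\beta$ and $H(S_{b_j}^i\mid W_i,S_{A'}^i,S_{b_1}^i,\ldots,S_{b_{j-1}}^i)=0$; summing the second family gives $H(S_{B'}^i\mid W_i,S_{A'}^i)=0$ directly. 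For the first identity the paper then writes $H(S_{A'\cup B'}^i)=H(S_{A'}^i)+(d-k+1)\beta$ and asserts $H(S_{A'}^i)=(k-1)\beta$ without further argument.

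Your engine is instead a single joint-entropy collapse $k\alpha=H(W_i,W_P)\le H(W_P,S_Q^i)\le (k-1)\alpha+(d-k+1)\beta=k\alpha$, which packages the independence $S_Q^i\perp W_P$ and the saturation $H(S_Q^i)=(d-k+1)\beta$ in one shot. You then supply the step the paper skips---mutual independence of the $k-1$ streams in $S_{A'}^i$---by your re-splitting chain-rule argument, and you recover the second identity as an immediate corollary of the first via $H(W_i,S_{A'}^i)=d\beta=H(S_{A'\cup B'}^i)$. The upshot: the paper gets $H(S_{B'}^i\mid W_i,S_{A'}^i)=0$ more directly but leaves $H(S_{A'}^i)=(k-1)\beta$ implicit; your argument is more self-contained on exactly that point, at the cost of deriving the second identity indirectly through the first.
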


\begin{proof}Without loss of generality, we assume $i=1$. The proof is given in two steps as follows.

1. According to the Property 2 in \cite{Re:Shah.N.B}, it is trivial that $I(W_1;W_{A'})=0$ in the MSR scenario, which leads to $H(W_1|S_{A'}^1)=\alpha$.\\

2. We set $B'=(b_1,b_2,\cdots, b_{d-k+1})$. Due to the repair property, it must be that $H(W_1|S_{A'}^1,S_{B'}^1)=0$. Next, some key inequalities are present from Lemma \ref{helpful}:
\begin{equation}\label{inequation}
\left\{ \begin{aligned}
&H(W_1|S_{A'}^1)-H(W_1|S_{A'}^1,S_{b_1}^1)\\
&=I(W_1;S_{b_1}^1|S_{A'}^1)\\
&=H(S_{b_1}^1|S_{A'}^1)-H(S_{b_1}^1|W_1,S_{A'}^1)\\
&\leq \beta;\\
&H(W_1|S_{A'}^1,S_{b_1}^1)-H(W_1|S_{A'}^1,S_{b_1}^1,S_{b_2}^1)\\
&=I(W_1;S_{b_2}^1|S_{A'}^1,S_{b_1}^1)\\
&=H(S_{b_2}^1|S_{A'}^1,S_{b_1}^1)-H(S_{b_2}^1|W_1,S_{A'}^1,S_{b_1}^1)\\
&\leq \beta;\\
&\quad \quad\quad\quad\quad\quad\quad\quad\quad\quad\vdots\\
&H(W_1|S_{A'}^1,S_{b_1}^1,S_{b_2}^1,\cdots,S_{b_{d-k}}^1)-H(W_1|S_{A'}^1,S_{B'}^1)\\
&=I(W_1;S_{b_{d-k+1}}^1|S_{A'}^1,S_{\{B'\setminus{b_{d-k+1}}\}}^1)\\
&=H(S_{b_{d-k+1}}^1|S_{A'}^1,S_{\{B'\setminus{b_{d-k+1}}\}}^1)-H(S_{b_{d-k+1}}^1|W_1,S_{A'}^1,S_{\{B'\setminus{b_{d-k+1}}\}}^1)\\
&\leq \beta.
\end{aligned} \right.
\end{equation}
By summing up the left side of the inequalities, we derive
\begin{equation}
\alpha=H(W_1|S_{A'}^1)-H(W_1|S_{A'}^1,S_{B'}^1)\leq (d-k+1)\beta.
\end{equation}
Because $\alpha=(d-k+1)\beta$, it is mandatory that all the inequalities (\ref{inequation}) actually are equations. Thus, for any $j\in[1,d-k+1]$, we have
\begin{equation}\label{addition pro1}
\left\{ \begin{array}{l}
H(S_{b_j}^1|S_{A'}^1,S_{\{b_1,\cdots,b_{j-1}\}}^1)=\beta\\
H(S_{b_j}^1|W_1,S_{A'}^1,S_{\{b_1,\cdots,b_{j-1}\}}^1)=0,
\end{array} \right.
\end{equation}
from which we can derive
\begin{equation}
\left\{\begin{aligned}
&H(S_{A'\cup B'}^1)\\
&=H(S_{A'}^1)+H(S_{B'}^1|S_{A'}^1)\\
&=(k-1)\beta+\sum_{j=1}^{j=d-k+1}H(S_{b_j}^1|S_{A'}^1,S_{\{b_1,\cdots,b_{j-1}\}}^1)\\
&=(k-1)\beta+(d-k+1)\beta\\
&=d\beta
\end{aligned}\right.
\end{equation}
and
\begin{equation}
\left\{\begin{aligned}
&H(S_{B'}^1|W_1,S_{A'}^1)\\
&=\sum_{j=1}^{j=d-k+1}H(S_{b_j}^1|W_1,S_{A'}^1,S_{\{b_1,\cdots,b_{j-1}\}}^1)\\
&=0
\end{aligned}\right.
\end{equation}

\end{proof}

\begin{rem}
This lemma exhibits the special properties of MSR codes. $H(S_{A'\cup B'}^i)=d\beta$ means any repair data from disjoint sets of nodes upon failure of node $i$ are mutually independent. $H(S_{B'}^i|W_i,S_{A'}^i)=0$ implies that given the contents of node $i$ and the repair data from any $k-1$ nodes, the repair data from the remaining $d-k+1$ nodes are deterministic.
\end{rem}

\vspace{0.2cm}

\begin{lemma}\label{Secure size}
In the MSR scenario with $\{n=d+1,k,d,\alpha,\beta\}$, we have
\begin{equation}\label{secure size}
H(S^F|W_E,W_F)=H(S_G^F),
\end{equation}
where $E$, $F$ and $G$ are pairwise disjoint sets as defined in Section 2.4 and $|E\cup F\cup G|=k$. Furthermore, when $E=\emptyset $, we still have $H(S^F|W_F)=H(S_G^F)$, where $|F\cup G|=k$.
\end{lemma}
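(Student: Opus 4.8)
The plan is to split $S^F$ according to the origin of each block of repair data. Put $H:=[1,d+1]\setminus(E\cup F\cup G)$, so that $|H|=d+1-k$ and $[1,d+1]=E\cup F\cup G\cup H$ is a partition, and write $S^F=S_E^F\cup S_F^F\cup S_G^F\cup S_H^F$. I would show that, once we condition on $(W_E,W_F)$, the blocks $S_E^F$, $S_F^F$ and $S_H^F$ all become deterministic, so only $S_G^F$ survives, and that finally the conditioning on $(W_E,W_F)$ may be dropped from $H(S_G^F\mid W_E,W_F)$.

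First, because the symbols a helper node $i$ sends for repair are a deterministic function of its stored content $W_i$, every symbol of $S_E^F$ is a function of $W_E$ and every symbol of $S_F^F$ is a function of $W_F$; hence $H(S_E^F,S_F^F\mid W_E,W_F)=0$, and the chain rule together with the fact that adjoining a deterministic function of the conditioning to the conditioning changes nothing gives $H(S^F\mid W_E,W_F)=H(S_G^F,S_H^F\mid W_E,W_F)$. Next, I would eliminate $S_H^F$ using Lemma \ref{conditional entropy}. Fix a failed node $j\in F$; it is efficiently repairable in the MSR setting, so we may apply that lemma with $i=j$, $A'=E\cup(F\setminus\{j\})\cup G$ and $B'=H$. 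The cardinalities match exactly, $|A'|=l_1+(l_2-1)+(k-l_1-l_2)=k-1$ and $|B'|=d-k+1=|H|$, with $A'\cup B'=[1,d+1]\setminus\{j\}$ and $A'\cap B'=\emptyset$, so Lemma \ref{conditional entropy} yields $H(S_H^j\mid W_j,S_{A'}^j)=0$. Since $W_j$ is a component of $W_F$ and $S_E^j$, $S_{F\setminus\{j\}}^j$ are functions of $(W_E,W_F)$, the pair $(W_j,S_{A'}^j)$ is a deterministic function of $(W_E,W_F,S_G^j)$, whence $H(S_H^j\mid W_E,W_F,S_G^j)=0$ and a fortiori $H(S_H^j\mid W_E,W_F,S_G^F)=0$.

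Summing over $j\in F$ (subadditivity of entropy) then yields $H(S_H^F\mid W_E,W_F,S_G^F)=0$, so $H(S_G^F,S_H^F\mid W_E,W_F)=H(S_G^F\mid W_E,W_F)$. Finally, since $|E\cup F\cup G|=k$, the reconstruction property forces $H(W_E,W_F,W_G)=k\alpha=H(W_E)+H(W_F)+H(W_G)$, so $W_G$ is independent of $(W_E,W_F)$; as $S_G^F$ is a function of $W_G$, we get $H(S_G^F\mid W_E,W_F)=H(S_G^F)$. Chaining the three equalities gives $H(S^F\mid W_E,W_F)=H(S_G^F)$. The case $E=\emptyset$ is identical with $E$ deleted throughout: now $|G|=k-l_2$, the choice $A'=(F\setminus\{j\})\cup G$ still has size $k-1$ and $B'=H$ still has size $d-k+1$, so the same chain of equalities yields $H(S^F\mid W_F)=H(S_G^F)$.

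The only genuinely delicate point is the choice of $A'$ and $B'$ in the second step: one must take $A'$ large enough to absorb all of $E$, $F\setminus\{j\}$ and $G$ and small enough (exactly $k-1$ nodes) that the complementary set $B'$ coincides with $H$, which is precisely where the hypothesis $|E\cup F\cup G|=k$ enters. Everything else is routine manipulation with the chain rule and the observation that transmitted repair data is a deterministic function of the transmitting node's content.
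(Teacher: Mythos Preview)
Your proof is correct and follows essentially the same approach as the paper: partition the helper nodes into $E$, $F$, $G$ and the complement (the paper calls it $T$, you call it $H$), drop $S_E^F$ and $S_F^F$ as functions of the conditioning, eliminate the complement's contribution via Lemma~\ref{conditional entropy} applied with $A'=(E\cup F\cup G)\setminus\{j\}$ and $B'$ equal to that complement, and finally remove the conditioning on $(W_E,W_F)$ by independence of any $k$ nodes in the MSR setting. The only cosmetic difference is that the paper cites Property~2 of \cite{Re:Shah.N.B} for the last step while you derive it directly from $H(W_E,W_F,W_G)=k\alpha$; as a minor presentational point, reusing the symbol $H$ for both the entropy function and your complement set invites confusion.
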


\begin{proof}

Assume all the $d+1$ nodes are comprised of $E,F,G$ and $T$, where $|E\cup F\cup G|=k$ and $|T|=d-k+1$. Thereby, we have
\begin{equation}
\left\{\begin{aligned}
&H(S^F|W_{\{E,F\}})\\
&=H(S_{\{E,F,G,T\}}^F|W_{\{E,F\}})\\
&=H(S_{\{G,T\}}^F|W_{\{E,F\}})\\
&=H(S_G^F|W_{\{E,F\}})+H(S_T^F|W_{\{E,F\}},S_G^F).
\end{aligned}\right.
\end{equation}

Then, due to the condition $|(E,F,G)\setminus i|=k-1$, Lemma \ref{conditional entropy} leads to that for any $i\in F$,
\begin{equation}
\left\{\begin{aligned}
&H(S_T^i|W_{\{E,F\}},S_G^F)\\
&\leq H(S_T^i|W_{\{E,F\}},S_G^i)\\
&=H(S_T^i|W_i,W_{\{(E,F)\setminus i\}},S_G^i)\\
&\leq H(S_T^i|W_i,S_{\{(E,F)\setminus i\}}^i,S_G^i)\\
&=H(S_T^i|W_i,S_{\{(E,F,G)\setminus i\}}^i)\\
&=0,
\end{aligned}\right.
\end{equation}
from which we derive $H(S_T^F|W_{\{E,F\}},S_G^F)=0$.

Still by the Property 2 in \cite{Re:Shah.N.B}, it has to be that $H(S_G^F|W_E,W_F)=H(S_G^F)$, since $|E\cup F\cup G|=k$. Hence, we obtain the proof. In addition, the above deduction is obviously applicable to the situation when $E=\emptyset$.
\end{proof}

\begin{rem}\label{simply}
Combining this lemma with Lemma \ref{secure expression}, we have
\begin{equation}
B^{(s)}\leq(k-l_1-l_2)\alpha-H(S_G^F).
\end{equation}
In particular, for the linear MSR codes, this upper bound can always be achieved (as in Remark \ref{secrecy achieve}). Moreover, the equation $H(S^F|W_F)=H(S_G^F)$ promotes the next result.
\end{rem}

\vspace{0.2cm}

\begin{lemma}\label{interesting1}
In the MSR scenario with $\{n=d+1,k,d,\alpha,\beta\}$, for any subset $F$ such that $|F|\leq k-1$, and arbitrary different $i_1,i_2$ where $i_1,i_2\notin F$, we have $H(S_{i_1}^F)=H(S_{i_2}^F)$.
\end{lemma}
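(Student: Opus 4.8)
The plan is to exploit the structural ``asymmetry'' in the identity of Lemma \ref{Secure size}: its right-hand side $H(S_G^F)$ mentions only the failed set $F$ and the helper set $G$, whereas its left-hand side $H(S^F\mid W_E,W_F)$ mentions only $E$ and $F$. Hence, with $F$ fixed, the quantity $H(S_G^F)$ cannot actually depend on \emph{which} admissible $G$ is chosen; it depends only on $|G| = k-|E|-|F|$ and on the existence of an admissible eavesdropper set $E$ disjoint from $F\cup G$. The decisive observation is that Lemma \ref{Secure size} is available with a \emph{nonempty} $E$, so we are free to take $|G|=1$ by letting $E$ occupy the remaining $k-|F|-1$ slots, and moreover we can pick $E$ so that it avoids \emph{both} $i_1$ and $i_2$ at once. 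Then the two singleton choices $G=\{i_1\}$ and $G=\{i_2\}$ both reduce the same conditional entropy, and the claim follows.

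Concretely, first I would fix $E$ to be any subset of $[1,d+1]\setminus(F\cup\{i_1,i_2\})$ with $|E| = k-|F|-1$. This choice is legitimate: since $|F|\le k-1$ we have $|E|\ge 0$, and the number of nodes outside $F\cup\{i_1,i_2\}$ is $d+1-|F|-2 = d-|F|-1$, which is $\ge k-|F|-1$ exactly because $d\ge k$. Now $E$, $F$, $\{i_1\}$ are pairwise disjoint with $|E\cup F\cup\{i_1\}| = k$, so Lemma \ref{Secure size} gives $H(S^F\mid W_E,W_F) = H(S_{i_1}^F)$ (using the $E=\emptyset$ clause of that lemma in the boundary case $|F|=k-1$, where $E$ is forced to be empty). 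Since $i_2\notin E\cup F$ as well, applying Lemma \ref{Secure size} a second time with $\{i_2\}$ in place of $\{i_1\}$ yields $H(S^F\mid W_E,W_F) = H(S_{i_2}^F)$. Chaining these two equalities through the common middle term $H(S^F\mid W_E,W_F)$ produces $H(S_{i_1}^F) = H(S_{i_2}^F)$, as desired.

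There is no genuine obstacle here beyond careful bookkeeping; the whole content is already packaged in Lemma \ref{Secure size}. The two points that require a moment's attention are (i) checking that a \emph{single} set $E$ can be taken simultaneously disjoint from $F$, from $i_1$, and from $i_2$ — this is precisely where the inequality $d\ge k$ (together with $|F|\le k-1$) is invoked — and (ii) the edge case $|F|=k-1$, in which $E=\emptyset$ and one must appeal to the $E=\emptyset$ version of Lemma \ref{Secure size} rather than its general statement. Everything else is immediate.
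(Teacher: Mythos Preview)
Your proof is correct. Both your argument and the paper's hinge on Lemma \ref{Secure size}, but the routes diverge in how they force the helper set down to a singleton. The paper always takes $E=\emptyset$, so that $H(S^F)=H(W_F)+H(S_{G'}^F)$ with $|G'|=k-|F|$; when $|G'|\ge 2$ it then splits $G'=\{g'\}\cup G_1$, $G''=\{g''\}\cup G_1$ and invokes the independence of any $k$ nodes in an MSR code to write $H(S_{G'}^F)=H(S_{g'}^F)+H(S_{G_1}^F)$ (and likewise for $G''$), cancelling the common $H(S_{G_1}^F)$. You instead exploit the freedom to take $E$ \emph{nonempty}: by loading $k-|F|-1$ nodes into $E$ (chosen to avoid both $i_1$ and $i_2$, which $d\ge k$ makes possible) you get $|G|=1$ directly, and the two choices $G=\{i_1\}$, $G=\{i_2\}$ equate through the common value $H(S^F\mid W_E,W_F)$. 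Your approach is a touch more economical, since it bypasses the case split and the independence decomposition; the paper's approach has the minor aesthetic advantage that its pivot quantity $H(S^F)$ is manifestly independent of any auxiliary choice of $E$.
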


\begin{proof}
According to Lemma \ref{Secure size}, we obtain
\begin{equation}\label{general eq}
\left\{\begin{aligned}
&H(S^F)\\
&=H(S^F,W_F)\\
&=H(W_F)+H(S^F|W_F)\\
&=H(W_F)+H(S_{G'}^F|W_F)\\
&=H(W_F)+H(S_{G'}^F),
\end{aligned}\right.
\end{equation}
where $G'$ is any subset of $[1,d+1]$ such that $|G'|+|F|=k$ and $G' \cap F={\O}$.
Based on the condition $|F|\leq k-1$, then it has to be that $|G'|\geq 1$.

\vspace{0.2cm}

1. When $|G'|=1$, for any two different $g_1$ and $g_2$ where $g_1,g_2\in \{[1,d+1]\setminus F\}$,
\begin{equation}
H(S^F)=H(W_F)+H(S_{g_1}^F)=H(W_F)+H(S_{g_2}^F),
\end{equation}
which indicates $H(S_{g_1}^F)=H(S_{g_2}^F)$.

2. When $|G'|\geq 2$, we set $G'=\{g',G_1\}$ and $G''=\{g'',G_1\}$ such that $\{g'\neq g'',|G'|=|G''|=k-|F|,G'\cap F=G''\cap F=\emptyset\}$, where $G''$ plays the same role of $G'$ in the following statement. Similarly, we derive
\begin{equation}
\left\{\begin{aligned}
&H(S^F)\\
&=H(W_F)+H(S_{G'}^F)\\
&=H(W_F)+H(S_{g'}^F)+H(S_{G_1}^F);\\
&H(S^F)\\
&=H(W_F)+H(S_{G''}^F)\\
&=H(W_F)+H(S_{g''}^F)+H(S_{G_1}^F),
\end{aligned}\right.
\end{equation}
which implies $H(S_{g'}^F)=H(S_{g''}^F)$.

Because the choice of $(g_1,g_2)$ and $(g',g'')$ are arbitrary, then for arbitrary different $i_1,i_2$ where $i_1,i_2\notin F$, we have $H(S_{i_1}^F)=H(S_{i_2}^F)$.

\end{proof}

\begin{rem}\label{iden}
Lemma \ref{interesting1} shows that the entropy of repair data from any two nodes assisting in repairing the same subsets of nodes are identical. Combining this lemma with Remark \ref{simply}, we further obtain the following result on upper bound.
\end{rem}

\subsection{A Simple Expression of Upper Bound}

Incorporating Lemma \ref{secure expression}, Lemma \ref{Secure size} and Lemma \ref{interesting1}, we consequently derive a simple and generally applicable result on secrecy capacity as follows.

\begin{theorem}\label{sim}
In the scenario of MSR codes with parameter set $\{n=d+1,k,d,\alpha,\beta\}$, we have
\begin{equation}\label{secure size further}
B^{(s)}\leq (k-l_1-l_2)\big(\alpha-H(S_g^F)\big),
\end{equation}
where $g\in G$, $|G|=k-l_1-l_2$, and $|F|=l_2$.
\end{theorem}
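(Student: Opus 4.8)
The plan is to assemble Theorem \ref{sim} directly from the three lemmas already established, treating $H(S_G^F)$ as the quantity to be controlled and showing it equals exactly $(k-l_1-l_2)H(S_g^F)$. First I would recall from Remark \ref{simply} (which combines Lemma \ref{secure expression} with Lemma \ref{Secure size}) that the secrecy capacity satisfies
\begin{equation}
B^{(s)}\leq (k-l_1-l_2)\alpha-H(S_G^F),
\end{equation}
so the whole task reduces to evaluating $H(S_G^F)$. The key observation is that $G$ is a set of $k-l_1-l_2$ nodes, and I want to expand $H(S_G^F)$ as a sum over the individual nodes $g\in G$ of their repair-data contributions, then show each contribution equals the common value $H(S_g^F)$.

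The main work is to justify that the chain-rule expansion collapses to a clean multiple rather than merely bounding it. I would write $G=\{g_1,g_2,\dots,g_{k-l_1-l_2}\}$ and apply the chain rule for joint entropy:
\begin{equation}
H(S_G^F)=\sum_{t=1}^{k-l_1-l_2}H\big(S_{g_t}^F\,\big|\,S_{\{g_1,\dots,g_{t-1}\}}^F\big).
\end{equation}
The crucial point — and where Lemma \ref{conditional entropy} and the mutual-independence structure of MSR codes enter — is that each successive conditioning term is actually independent of the previously included repair data, so that $H(S_{g_t}^F\mid S_{\{g_1,\dots,g_{t-1}\}}^F)=H(S_{g_t}^F)$ for every $t$. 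This independence across disjoint helper nodes is exactly the content of the relation $H(S_{A'\cup B'}^i)=d\beta$ from Lemma \ref{conditional entropy} (repair data from disjoint sets of nodes upon failure of node $i$ are mutually independent), applied node-by-node in $F$ and then aggregated over $F$. Once that independence is in hand, the sum becomes $\sum_{t}H(S_{g_t}^F)$, and Lemma \ref{interesting1} guarantees each $H(S_{g_t}^F)$ equals the common value $H(S_g^F)$ for an arbitrary fixed $g\in G$, yielding
\begin{equation}
H(S_G^F)=(k-l_1-l_2)\,H(S_g^F).
\end{equation}

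The hard part will be rigorously establishing the step-wise independence $H(S_{g_t}^F\mid S_{\{g_1,\dots,g_{t-1}\}}^F)=H(S_{g_t}^F)$, because Lemma \ref{conditional entropy} is phrased per single failed node $i$ whereas here the repair data are indexed by the whole set $F$ of failed nodes. I would handle this by a second, inner application of the chain rule over the nodes $j\in F$, reducing each joint quantity $S_{g_t}^F$ to a product of per-failure terms $S_{g_t}^j$ and invoking the disjointness of the helper indices $\{g_1,\dots,g_t\}$ together with the independence guaranteed by Lemma \ref{conditional entropy} for each fixed $j$. Substituting the resulting identity $H(S_G^F)=(k-l_1-l_2)H(S_g^F)$ back into the bound from Remark \ref{simply} gives
\begin{equation}
B^{(s)}\leq (k-l_1-l_2)\alpha-(k-l_1-l_2)H(S_g^F)=(k-l_1-l_2)\big(\alpha-H(S_g^F)\big),
\end{equation}
which is precisely the claimed statement. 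I would be careful to ensure the independence argument produces an exact equality (not the subadditive inequality $H(S_G^F)\le\sum_t H(S_{g_t}^F)$), since only the equality yields an upper bound on $B^{(s)}$ of the correct sign after the subtraction.
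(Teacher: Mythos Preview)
Your overall framework is exactly right and matches the paper: from Remark~\ref{simply} one has $B^{(s)}\le (k-l_1-l_2)\alpha-H(S_G^F)$, so everything reduces to proving the identity $H(S_G^F)=(k-l_1-l_2)\,H(S_g^F)$, after which Lemma~\ref{interesting1} finishes the job. The issue is with the mechanism you propose for the independence step.

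Lemma~\ref{conditional entropy} only gives, for each \emph{fixed} failed node $j$, that the pieces $\{S_h^j:h\neq j\}$ are mutually independent. Your double chain rule produces terms of the form
\[
H\big(S_{g_t}^{j_s}\ \big|\ S_{\{g_1,\dots,g_{t-1}\}}^{F},\,S_{g_t}^{\{j_1,\dots,j_{s-1}\}}\big),
\]
and to equate this with $H(S_{g_t}^{j_s}\mid S_{g_t}^{\{j_1,\dots,j_{s-1}\}})$ you must remove the conditioning on $S_{\{g_1,\dots,g_{t-1}\}}^{F}$, which contains repair data for \emph{all} failed nodes in $F$, not just $j_s$. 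Per-$j$ independence does not imply the block independence $(S_{g_t}^{j_1},\dots,S_{g_t}^{j_{l_2}})\perp(S_{g_1}^{F},\dots,S_{g_{t-1}}^{F})$; in general, knowing $X\perp Y$ and $X'\perp Y'$ does not give $(X,X')\perp(Y,Y')$. So the ``aggregate over $F$'' step is a genuine gap as written.

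The clean fix --- and what the paper uses implicitly (see the unjustified split $H(S_{G'}^F)=H(S_{g'}^F)+H(S_{G_1}^F)$ in case~2 of the proof of Lemma~\ref{interesting1}) --- is to bypass Lemma~\ref{conditional entropy} entirely here and invoke node-content independence: each $S_g^F$ is a deterministic function of $W_g$, and since $|E\cup F\cup G|=k$, Property~2 of \cite{Re:Shah.N.B} (any $k$ nodes have mutually independent contents in an MSR code) makes $\{W_g:g\in G\}$ mutually independent. Functions of independent variables are independent, so
\[
H(S_G^F)=\sum_{g\in G}H(S_g^F)=(k-l_1-l_2)\,H(S_g^F),
\]
the last equality by Lemma~\ref{interesting1}. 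With this one-line argument in place of your inner chain rule, the rest of your outline goes through verbatim and yields the theorem.
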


\begin{rem}\label{simple}
This can be viewed as a simple and generally applicable upper bound of $B^{(s)}$, since we only need to calculate or estimate the joint entropy of repair data transmitted from any single node $H(S_g^F)$. Still by Remark \ref{secrecy achieve}, this upper bound can be reached in the scenario of linear MSR codes.
\end{rem}

\subsection{Stable MSR Codes}

Given an MSR code with $\{n=d+1,k,d,\alpha,\beta\}$, since our focus is the exact repair, the random variables $W_j$ are invariant with time, i.e., they remain constant irrespective of the sequence of failures and repairs that occur in the storage system. Once construction of such an MSR code with $\{n=d+1\}$ is present, content of $S_i^j$ sent from a node $i$ to repair another node $j$ also keeps invariant. However, for the MSR code with $\{n>d+1,k,d,\alpha,\beta\}$, the repair data $S_i^j$ technically need not keep constant and may vary with different sets of helper nodes including the same node $i$, only needing to satisfy that per node storage $W_j$ stays unchanged. For instance, when node $j$ is failed, node $i$ is assigned to assist in repairing node $j$. Thus, there totally exists ${d-1}\choose{n-2}$ possible sets of helper nodes including node $i$.

Assume repair data of node $j$ is captured by the eavesdropper\footnote{It would be reasonable to assume here that
the identity of node $j$ can be recognized by eavesdropper, although node $j$ when failed will be replaced by newcomer nodes. In this case, the eavesdropper will gain access to all repair data via sitting on the same node $j$ undergoing different repair epochs.}. If content of repair data $S_i^j$ is not independent of the choice of the set of helper nodes and varies with them, after multiple repair epochs with different sets of helper nodes including node $i$, different information regarding repair data $S_i^j$ will be exposed to the eavesdropper. Thus, the eavesdropper is supposed to observe more information regarding repair data of node $j$, when compared to the case of invariant content of repair data. In the following, we will use an example to illustrate this security issue.

\begin{exa}\label{unstable}
Assume $E=\emptyset$ and $F=\{1\}$, i.e., only the repair data of node $1$ is eavesdropped. Consider two truncated MSR codes $\mathbb{M}$ and $\mathbb{M'}$ comprised of nodes set $[1,d+1]$ and $[1,3,\cdots,d+2]$ respectively from an MSR code with $\{n>d+1,k,d,\alpha,\beta\}$.

Since they still are MSR codes, they necessarily retain the properties in Section 3.1. Thus, we have
\begin{equation}
\left\{\begin{aligned}
&H\big(S^1(\mathbb{M})\big)=H(W_1)+(k-1)H\big(S_3^1(\mathbb{M})\big)=\alpha+(k-1)\beta=d\beta\\
&H\big(S^1(\mathbb{M'})\big)=H(W_1)+(k-1)H\big(S_3^1(\mathbb{M'})\big)=\alpha+(k-1)\beta=d\beta,
\end{aligned}\right.
\end{equation}
where $H\big(S^1(\mathbb{M})\big)$ and $H\big(S^1(\mathbb{M'})\big)$ respectively represent the repair data of node $1$ under different contexts of truncated MSR codes. Besides, it follows from Lemma \ref{interesting1} that $H\big(S_{i_1}^1(\mathbb{M})\big)=H\big(S_3^1(\mathbb{M})\big)$ and $H\big(S_{i_2}^1(\mathbb{M'})\big)=H\big(S_3^1(\mathbb{M'})\big)$ for any $i_1\in\mathbb{M}$ and $i_2\in\mathbb{M'}$.

Furthermore, similar to the deduction of properties given in Section 3.1, we derive
\begin{equation}
\left\{\begin{aligned}
&H\big(S^1(\mathbb{M}),S^1(\mathbb{M'})\big)\\
&=H\big(W_1,S^1(\mathbb{M}),S^1(\mathbb{M'})\big)\\
&=H(W_1)+H\big(S^1(\mathbb{M}),S^1(\mathbb{M'})|W_1\big)\\
&=H(W_1)+H\big(S_{[2,d+1]}^1(\mathbb{M}),S_{[3,d+2]}^1(\mathbb{M'})|W_1\big)\\
&=H(W_1)+H\big(S_{[3,k+1]}^1(\mathbb{M},\mathbb{M'})|W_1\big)
+H\big(S_{[2,k+2,\cdots,d+1]}^1(\mathbb{M}),S_{[k+2,d+2]}^1(\mathbb{M'})|W_1,S_{[3,k+1]}^1(\mathbb{M},\mathbb{M'})\big)\\
&=H(W_1)+H\big(S_{[3,k+1]}^1(\mathbb{M},\mathbb{M'})|W_1\big)\\
&=H(W_1)+H\big(S_{[3,k+1]}^1(\mathbb{M},\mathbb{M'})\big)\\
&=H(W_1)+(k-1)H\big(S_3^1(\mathbb{M},\mathbb{M'})\big),
\end{aligned}\right.
\end{equation}
where $H\big(S_{[2,k+2,\cdots,d+1]}^1(\mathbb{M}),S_{[k+2,d+2]}^1(\mathbb{M'})|W_1,S_{[3,k+1]}^1(\mathbb{M},\mathbb{M'})\big)=0$ results from Lemma \ref{conditional entropy}.

If $S_3^1(\mathbb{M})$ does not share the same information with $S_3^1(\mathbb{M'})$, it has to be that $H\big(S_3^1(\mathbb{M},\mathbb{M'})\big)>\beta$, which leads to that $H\big(S^1(\mathbb{M},\mathbb{M'})\big)>d\beta$.
It means that eavesdropper will inevitably obtain different data information after multiple repair epochs with different sets of helper nodes. When traversing all possible truncated MSR codes corresponding to repair epochs with all possible sets of helper nodes, it even may render the storage system unable to maintain any data secrecy.
\end{exa}

Based on the above security concern, we define a special MSR code as follows.

\begin{definition}(Stable MSR Code): A stable MSR code with $\{n>d+1,k,d,\alpha,\beta\}$ is an MSR code with the ``stable" repair property, i.e., the data transmitted from any node $i$ to repair node $j$ is independent of the set of helper nodes including the same node $i$. In other words, content of $S_i^j$ remains invariant under different sets of helper nodes including the same node $i$.
\end{definition}

One can check that the product-matrix-based MSR code \cite{Re:K.Rashmi} is a \emph{stable} MSR code. The following theorem will show that this stable property in fact is the equivalent condition of secrecy capacity between any MSR code with $\{n>d+1,k,d,\alpha,\beta\}$ and its truncated one with $\{n=d+1,k,d,\alpha,\beta\}$.

\begin{lemma}\label{identical}
Let $\mathbb{N}$ be a stable MSR code with the parameter set $\{n>d+1,k,d,\alpha,\beta\}$ and $\mathbb{N'}$ be the stable MSR code with $\{n=d+1,k,d,\alpha,\beta\}$ truncated from $\mathbb{N}$, then the secrecy capacity of $\mathbb{N}$ is same as that of $\mathbb{N'}$.
\end{lemma}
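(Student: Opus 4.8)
The plan is to prove the two inequalities $B^{(s)}(\mathbb{N})\le B^{(s)}(\mathbb{N'})$ and $B^{(s)}(\mathbb{N})\ge B^{(s)}(\mathbb{N'})$ separately; in both directions the \emph{stable} property is what lets us replace every repair-data random variable of $\mathbb{N}$ by a genuine single variable and then push all entropy computations into a $(d+1)$-node truncation.

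For $B^{(s)}(\mathbb{N})\le B^{(s)}(\mathbb{N'})$ I would take an arbitrary secure coding of $\mathbb{N}$ and restrict it to the $d+1$ nodes that constitute $\mathbb{N'}$. Reconstruction is inherited because any $k$ of those nodes already carry the full message in $\mathbb{N}$ (their joint entropy is $k\alpha=B$), and regeneration is inherited because repairing a node of $\mathbb{N'}$ from the remaining $d$ nodes of $\mathbb{N'}$ is an admissible repair in $\mathbb{N}$, using exactly the invariant symbols $S_i^j$ guaranteed by stability. Finally, for any $E,F\subseteq[1,d+1]$ the observation $\{W_E,S^F\}$ of the $\mathbb{N'}$-eavesdropper is informationally contained in the observation of the $\mathbb{N}$-eavesdropper who selects the same nodes, so $I(D;W_E,S^F)=0$ persists. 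Hence every file securely storable in $\mathbb{N}$ is securely storable in $\mathbb{N'}$.

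For the reverse inequality, by Theorem \ref{sim} and Remark \ref{secrecy achieve} it is enough to show that for every admissible pair $(E,F)$ the quantity $H(W_E,S^F)$ evaluated in $\mathbb{N}$ equals its value in $\mathbb{N'}$. The crux is to extend Lemma \ref{conditional entropy} and Lemma \ref{Secure size} to the stable $n>d+1$ regime: from the repair symbols sent to a failed node $j$ by any $d$ helpers lying in one truncation one recovers $W_j$, and then for any further helper $m$, applying Lemma \ref{conditional entropy} inside a $(d+1)$-node truncation containing both $m$ and $j$ shows $S_m^j$ is a deterministic function of $W_j$ together with $k-1$ already-known repair symbols. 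Since stability makes $S_m^j$ a single well-defined variable, iterating this shows that every repair symbol for $F$ in $\mathbb{N}$ is determined inside the truncation, so $H(W_E,S^F)$ is unchanged on passing to it; Lemma \ref{interesting1} removes the dependence on which nodes were eavesdropped, and a relabeling (any $d+1$ nodes of $\mathbb{N}$ form a stable MSR code with the same parameters) brings an eavesdropper that picked nodes outside the fixed $d+1$-node set back to $\mathbb{N'}$. Consequently $B^{(s)}(\mathbb{N})=(k-l_1-l_2)\big(\alpha-H(S_g^F)\big)$ (achievability as in Remark \ref{secrecy achieve}) with the \emph{same} $H(S_g^F)$ as in $\mathbb{N'}$, and the two bounds together give the equality.

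The step I expect to be the main obstacle is precisely this extension of the Section 3.1 machinery past $n=d+1$: one must argue rigorously that the ``extra'' repair symbols $S_m^j$ with $m$ outside a chosen truncation contribute no new entropy — which is exactly the content of the stable hypothesis and fails without it, as Example \ref{unstable} shows — together with the bookkeeping needed to certify that all $(d+1)$-node truncations of a stable MSR code are secrecy-equivalent, so that the particular truncation named $\mathbb{N'}$ is representative.
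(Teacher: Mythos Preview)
Your proposal is correct and its technical heart coincides with the paper's proof: the crucial step in both is to show that for each $j\in F$ the ``extra'' repair symbols $S^j_{[d+2,n]}$ are deterministic functions of $S^j_{[1,d+1]}$, obtained by picking a $(d+1)$-node truncation of $\mathbb{N}$ that contains both $j$ and the extra helper $m$, and then invoking Lemma~\ref{conditional entropy} there (stability is what makes $S_m^j$ a single well-defined variable so that the conclusion transfers back to $\mathbb{N}$).

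The organizational difference is minor. The paper does not split into two inequalities: it fixes $E,F,G\subseteq[1,d+1]$, writes down the universal bound of Lemma~\ref{secure expression} for both $\mathbb{N}$ and $\mathbb{N'}$, and reduces everything to the single identity $H\big(S^F(\mathbb{N})\,\big|\,W_E,W_F\big)=H\big(S^F(\mathbb{N'})\,\big|\,W_E,W_F\big)$, i.e.\ to $H(S^F_{[d+2,n]}\mid S^F_{[1,d+1]})=0$, which it proves via the truncation argument above. Your restriction argument for $B^{(s)}(\mathbb{N})\le B^{(s)}(\mathbb{N'})$ and your relabeling argument for eavesdroppers choosing nodes outside $[1,d+1]$ are extra bookkeeping the paper leaves implicit; they are correct and arguably make the statement more airtight, but they do not change the substance. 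Either way, the ``main obstacle'' you anticipated---pushing Lemma~\ref{conditional entropy} past $n=d+1$ by working inside an auxiliary truncation---is exactly what the paper does, so you should have no difficulty filling in the details.
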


\begin{proof}

Without loss of generality, assume $\mathbb{N'}$ is comprised of the nodes set $[1,d+1]$ truncated from $\mathbb{N}$. We set the same subsets $E,F,G$ for $\mathbb{N}$ and $\mathbb{N'}$, where $E,F,G$ are three disjoint subsets of $[1,d+1]$ as defined in section 2.4.

Lemma \ref{secure expression} indicates, for any secure regenerating code with $\{n=d+1,k,d,\alpha,\beta\}$,
\begin{equation}\label{gb}
B^{(s)}\leq H(W_G|W_E,W_F)-H(S^F|W_E,W_F).
\end{equation}
Although this universal upper bound is established on regenerating codes with length equaling to $\{d+1\}$, it actually is also applicable to those extended regenerating codes with $\{n>d+1\}$ since they still have the reconstruction and regeneration properties. Nevertheless, in order to avoid confusion, we let $S^F(\mathbb{N})$ and $S^F(\mathbb{N'})$ respectively represent the repair data of the nodes set $F$ under the contexts of $\mathbb{N}$ and $\mathbb{N'}$. Accordingly, we have
\begin{equation}
\left\{\begin{aligned}
&B^{(s)}(\mathbb{N})\leq H(W_G|W_E,W_F)-H\big(S^F(\mathbb{N})|W_E,W_F\big) \\
&B^{(s)}(\mathbb{N'})\leq H(W_G|W_E,W_F)-H\big(S^F(\mathbb{N'})|W_E,W_F\big),
\end{aligned}\right.
\end{equation}
with which we only need to prove that $H\big(S^F(\mathbb{N})|W_E,W_F\big)=H\big(S^F(\mathbb{N'})|W_E,W_F\big)$. Since $\mathbb{N}$ and $\mathbb{N'}$ both are \emph{stable} MSR codes, we can unambiguously substitute $S^F(\mathbb{N})=S^F_{[1,n]}$ and $S^F(\mathbb{N'})=S^F_{[1,d+1]}$. Thus, it follows by showing that
\begin{equation}
H(S^F_{[1,n]}|W_E,W_F)=H(S^F_{[1,d+1]}|W_E,W_F),
\end{equation}
with which it is sufficient to prove that $H(S^F_{[1,n]}|S^F_{[1,d+1]})=0$. To this end, it is equivalent to prove that $H(S^i_{[1,n]}|S^i_{[1,d+1]})=H(S^i_{[d+2,n]}|S^i_{[1,d+1]})=0$ for any $i\in F$.

Without any loss of generality, we consider the situation when $i=1$. Thus, we have
\begin{equation}\label{equivalent}
\left\{\begin{aligned}
&H(S^1_{[d+2,n]}|S^1_{[1,d+1]})\\
&=H(S^1_{[d+2,n]}|S^1_{[2,d+1]})\\
&=H(S^1_{[d+2,n]}|W_1,S_{[2,d+1]}^1)\\
&\leq H(S^1_{[d+2,n]}|W_1,S_{[2,k]}^1).
\end{aligned}\right.
\end{equation}

1. When $n-d-1\geq d-k+1$, we set $Q$ is any subset of $[d+2,n]$ of size $d-k+1$. Because $\mathbb{N}$'s any truncated code with $\{d+1,k,d,\alpha,\beta\}$ still is an MSR code, the nodes set $\{[2,k]\cup Q\}$ can be viewed as a truncated MSR code. Due to the second term of equation (\ref{addition pro}) in Lemma \ref{conditional entropy}, we further derive $H(S^1_Q|W_1,S_{[2,k]}^1)=0$. Since $Q$ is a random subset of $[d+2,n]$, it is obvious that $H(S^1_{[d+2,n]}|W_1,S_{[2,k]}^1)=0$.

\vspace{0.2cm}
2. When $n-d-1< d-k+1$, we set $Q$ is any $d-k+1$-sized set such that $[d+2,n]\subset Q$ and $[1,k]\cap Q=\emptyset$. Similarly, we have $H(S^1_Q|W_1,S_{[2,k]}^1)=0$, from which we can also derive $H(S^1_{[d+2,n]}|W_1,S_{[2,k]}^1)=0$.

Combined with formula (\ref{equivalent}), both cases imply that $H(S^1_{[d+2,n]}|S^1_{[1,d+1]})=0$.

\end{proof}

\begin{rem}\label{stable}
Lemma \ref{identical} indicates that secrecy capacity of \emph{stable} MSR codes does not depend on the parameter $n$ but the remaining parameters $\{k,d,\alpha,\beta,B\}$. One example of \emph{stable} MSR codes with $\{n>d+1\}$ is the product-matrix-based MSR code given by Rashmi et al \cite{Re:K.Rashmi}. In another aspect, it is an interesting question to design an MSR code with unstable property. However, for any unstable MSR code with $\{n>d+1\}$, its secrecy capacity is strictly less than that of the corresponding truncated one with $\{n=d+1\}$, as shown in Example \ref{unstable}. Thus, this stable property is highly advantageous in constructing secure MSR codes.

\end{rem}

\begin{no}

In subsequent discussion, we focus on the secrecy capacity of linear MSR codes with $\{n=d+1,k,d,\alpha,\beta\}$.
\end{no}

\section{SECRECY CAPACITY OF LINEAR MSR CODES}

In this section, we will give a comprehensive and explicit result on secrecy capacity for linear MSR codes with $\{n=d+1,k,d,\alpha,\beta\}$, which is divided into two categories. In the first category, the secrecy capacity is fully characterized, which applies to all linear scalar MSR codes, i.e., $\beta=1$. In the second category, upper bounds on secrecy capacity are present, which apply to all known vector codes with $\{\beta=(d-k+1)^x\}$ where $x\geq 1$ such as Zigzag code \cite{Re:I.Tamo}. Furthermore, these two categories will be shown to also apply to those unexplored linear vector MSR codes with $\{1<\beta<d-k+1\}$. Before these, we first give a lemma that will be used in the subsequent proofs.

\begin{lemma}\label{expre}
Given any regenerating code with $\{n=d+1,k,d,\alpha,\beta\}$, for any set $J=(j_1,j_2,\cdots,j_m)\subseteq[1,d+1]$, we have
\begin{equation}\label{express}
H(S^J)=H(S_{\{[1,d+1]\setminus j_1\}}^{j_1}, S_{\{[1,d+1]\setminus (j_1,j_2)\}}^{j_2}, \cdots, S_{\{[1,d+1]\setminus (j_1,j_2,\cdots,j_{m})\}}^{j_m}).
\end{equation}
\end{lemma}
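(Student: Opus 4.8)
The approach is to write $S^J$ as the union of the list $L$ appearing on the right-hand side of (\ref{express}) together with the ``missing'' repair symbols $X=\{S_{j_a}^{j_b}:1\le a<b\le m\}$, and then to show $H(X|L)=0$; by the chain rule $H(S^J)=H(L,X)=H(L)+H(X|L)=H(L)$, which is exactly the claimed identity.

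First I would settle the set-theoretic decomposition. For each index $t$, the block $S_{\{[1,d+1]\setminus(j_1,\dots,j_t)\}}^{j_t}$ collects $S_i^{j_t}$ over $i\in[1,d+1]\setminus\{j_1,\dots,j_t\}$, while $S^J$ collects $S_i^{j_t}$ over all $i\ne j_t$; therefore $S^J$ differs from $L$ precisely by the symbols $S_{j_a}^{j_t}$ with $a<t$, i.e. by the set $X$, and $L$ and $X$ are disjoint. Hence it suffices to prove that $L$ determines every symbol of $X$.

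The heart of the argument is an induction on $t$ establishing $H(W_{j_t}|L)=0$ for $t=1,\dots,m$, using only two basic facts about any regenerating code with $n=d+1$: the exact-repair (regeneration) property $H(W_j|S_{\{[1,d+1]\setminus j\}}^j)=0$, and the fact that the data $S_i^j$ transmitted by node $i$ is by construction a deterministic function of its stored content $W_i$, so $H(S_i^j|W_i)=0$. For $t=1$, the block $S_{\{[1,d+1]\setminus j_1\}}^{j_1}$ — which is already all of the repair data entering node $j_1$ — lies inside $L$, so the regeneration property gives $H(W_{j_1}|L)=0$. For the inductive step, assume $H(W_{j_1},\dots,W_{j_{t-1}}|L)=0$. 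Then for each $a<t$ the symbol $S_{j_a}^{j_t}$ is a function of $W_{j_a}$, hence of $L$, so $H(S_{j_a}^{j_t}|L)=0$; combining these with the symbols $S_i^{j_t}$, $i\in[1,d+1]\setminus\{j_1,\dots,j_t\}$, that are contained in $L$ itself, we see that $L$ determines the whole collection $S_{\{[1,d+1]\setminus j_t\}}^{j_t}$ of repair data entering node $j_t$, and the regeneration property again yields $H(W_{j_t}|L)=0$. Once all of $W_{j_1},\dots,W_{j_m}$ are known to be determined by $L$, every $S_{j_a}^{j_b}\in X$ is a function of $W_{j_a}$ and hence of $L$, so $H(X|L)=0$, as required.

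The proof is essentially bookkeeping, so I do not expect a serious obstacle; the one point that needs care is tracking, at each stage of the induction, exactly which of the $d$ repair symbols entering node $j_t$ come directly from $L$ and which come from the already-reconstructed contents $W_{j_1},\dots,W_{j_{t-1}}$. It is also worth noting that the statement presumes a fixed ordering of $J=(j_1,\dots,j_m)$, since the decomposition $S^J=L\cup X$ and the whole argument depend on that ordering; and that neither the MSR relation $\alpha=(d-k+1)\beta$ nor optimality of the code is used, so the lemma in fact holds for every regenerating code with $n=d+1$.
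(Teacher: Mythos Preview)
Your proof is correct and follows essentially the same approach as the paper. Both arguments use the regeneration property to recover $W_{j_1}$ from the full block $S_{\{[1,d+1]\setminus j_1\}}^{j_1}\subseteq L$, then use the fact that each $S_{j_1}^{j_t}$ is a function of $W_{j_1}$ to supply the missing repair symbols for $j_2$, recover $W_{j_2}$, and iterate; the paper phrases this as a chain of equalities augmenting $L$ by $W_{j_1},S_{j_1}^{\{j_2,\dots,j_m\}},W_{j_2},\dots$ until it contains $S^J$, while you phrase it as an explicit induction establishing $H(W_{j_t}\mid L)=0$ and hence $H(X\mid L)=0$, but the content is the same.
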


\begin{proof}

The proof can be obtained from two directions.

First, it is clear that
\begin{equation}
\left\{\begin{aligned}
&H(S_{\{[1,d+1]\setminus j_1\}}^{j_1}, S_{\{[1,d+1]\setminus (j_1,j_2)\}}^{j_2}, \cdots, S_{\{[1,d+1]\setminus (j_1,j_2,\cdots,j_{m})\}}^{j_m})\\
&\leq H(S^J).
\end{aligned}\right.
\end{equation}

Second, we can deduce that
\begin{equation}
\left\{\begin{aligned}
&H(\underbrace{S_{\{[1,d+1]\setminus j_1\}}^{j_1}}, S_{\{[1,d+1]\setminus (j_1,j_2)\}}^{j_2}, \cdots, S_{\{[1,d+1]\setminus (j_1,j_2,\cdots,j_{m})\}}^{j_m})\\
&=H(\underbrace{S_{\{[1,d+1]\setminus j_1\}}^{j_1},W_{j_1}}, S_{\{[1,d+1]\setminus (j_1,j_2)\}}^{j_2}, \cdots, S_{\{[1,d+1]\setminus (j_1,j_2,\cdots,j_{m})\}}^{j_m})\\
&=H(\underbrace{S_{\{[1,d+1]\setminus j_1\}}^{j_1},W_{j_1},S_{j_1}^{\{j_2,\cdots,j_m\}}}, S_{\{[1,d+1]\setminus (j_1,j_2)\}}^{j_2}, \cdots, S_{\{[1,d+1]\setminus (j_1,j_2,\cdots,j_{m})\}}^{j_m})\\
&=H(S_{\{[1,d+1]\setminus j_1\}}^{j_1},W_{j_1},S_{\{[1,d+1]\setminus j_2\}}^{j_2}, \cdots, S_{\{[1,d+1]\setminus (j_2,\cdots,j_{m})\}}^{j_m})\\
&\quad \quad\quad\quad\quad\quad\quad\quad\quad\quad\quad\quad\quad\quad\quad\quad\vdots\\
&=H(S_{\{[1,d+1]\setminus j_1\}}^{j_1},W_{j_1},S_{\{[1,d+1]\setminus j_2\}}^{j_2},W_{j_2}, \cdots, S_{\{[1,d+1]\setminus j_{m}\}}^{j_m},W_{j_m})\\
&\geq H(S_{\{[1,d+1]\setminus j_1\}}^{j_1},S_{\{[1,d+1]\setminus j_2\}}^{j_2}, \cdots, S_{\{[1,d+1]\setminus j_{m}\}}^{j_m})\\
&=H(S^J),
\end{aligned}\right.
\end{equation}
where the formulas in the braces follow from that, for any $l\in [1,m]$,
\begin{equation}
\left\{\begin{aligned}
&H(S_{\{[1,d+1]\setminus j_l\}}^{j_l})=H(S_{\{[1,d+1]\setminus j_l\}}^{j_l},W_{j_l})\\
&H(S_{j_l}^{\{j_{l+1},\cdots,j_m\}},W_{j_l})=H(W_{j_l}).
\end{aligned}\right.
\end{equation}

\end{proof}

\begin{rem}
This lemma indicates that there exist much dependence among repair data of multiple sets of nodes. With it, we can reduce the amount of helper nodes for some failed nodes. Thereby, we can derivatively obtain
\begin{equation}
H(S_{\{[1,d+1]\setminus (j_1,j_2,\cdots,j_{m})\}}^{j_m}|S^{\{j_1,j_2,\cdots,j_{m-1}\}})=
H(S^{\{j_1,j_2,\cdots,j_{m}\}})-H(S^{\{j_1,j_2,\cdots,j_{m-1}\}}),
\end{equation}
which will be used in the proofs later.
\end{rem}

\vspace{0.1cm}

\subsection{Category 1: Precise Value of Secrecy Capacity}
Here, we will give the precise value of secrecy capacity for linear MSR codes with $\{1\leq\beta<\frac{d-k+1}{l_2-1}\}$ and as a result prove the optimality of the secure product-matrix-based MSR codes given in \cite{Re:N.B.Shah}.

\subsubsection{4.1.1 The situation when $\beta=1$.}

\begin{theorem}\label{third proof}
In the linear MSR scenario, for any subsets $P$ and $T$ with $\{|P|=k,|T|=d-k+1,P\cap T=\emptyset\}$, any $F$ such that $F\subseteq T$ and $|F|\leq k-1$, and arbitrary $i\notin F$, we have $H(S_{i}^F)=|F|\beta=|F|$ when $\beta=1$.
\end{theorem}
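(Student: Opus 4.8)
The plan is to argue by contradiction, playing the linearity of the code off against its MDS (any $k$ nodes reconstruct) property. First I normalise the setup: by Lemma~\ref{interesting1} the value $H(S_i^F)$ does not depend on which $i\notin F$ we choose, so I may assume $i\in P$; and since $H(S_i^j)=\beta=1$ for every single repair symbol, the cases $|F|\le 1$ are trivial, so assume $2\le |F|\le\min\{k-1,\,d-k+1\}$, the second bound coming from $F\subseteq T$. In a linear code each $S_i^j$ is a fixed linear functional $v_i^{(j)}$ of the $\alpha=d-k+1$ mutually independent uniform symbols stored at node $i$, and by the proof of Lemma~\ref{integer} the quantity $H(S_i^F)$ is precisely the rank of the family $\{v_i^{(j)}:j\in F\}$. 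Assuming, for contradiction, $H(S_i^F)<|F|$ gives a nontrivial relation $\sum_{j\in F}c_jv_i^{(j)}=0$; its support $\{j:c_j\neq 0\}$ cannot be a single index $j_0$ (that would force $v_i^{(j_0)}=0$, i.e. $H(S_i^{j_0})=0\neq\beta$), so after relabelling I obtain $\sum_{t=1}^{s}c_tv_i^{(j_t)}=0$ with $s\ge 2$, all $c_t\neq 0$, and $\{j_1,\dots,j_s\}\subseteq F\subseteq T$.

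The heart of the proof is to leverage this dependence to violate the MDS property. Morally, $\sum_t c_tv_i^{(j_t)}=0$ means node $i$'s repair contribution toward the set $\{j_1,\dots,j_s\}$ lives in an $(s-1)$-dimensional rather than $s$-dimensional space; by choosing $W_i$ in the (nontrivial) common kernel of $v_i^{(j_1)},\dots,v_i^{(j_s)}$ and zeroing the $k-1$ nodes of $P\setminus\{i\}$ one obtains a codeword in which $P\setminus\{i\}$ sends nothing and node $i$ sends nothing toward repairing any of $j_1,\dots,j_s$. Threading the regeneration equations for $j_1,\dots,j_s$ (each $W_{j_t}$ being a fixed linear combination of the $d$ helper symbols $\{S_p^{j_t}\}_{p\neq j_t}$, which by Lemma~\ref{conditional entropy} with $\beta=1$ are mutually independent) together with the reconstruction property (a codeword vanishing on $k$ nodes is zero, and a nonzero MDS codeword over the $n=d+1$ node-blocks has weight at least $d-k+2$) should then force a nonzero codeword supported on at most the $s+1\le d-k+2$ blocks $\{i,j_1,\dots,j_s\}$ — which contradicts the minimum weight $d-k+2$ whenever $s+1<d-k+2$; the boundary case $s=d-k+1$, which forces $d=2k-2$ and $\alpha=k-1$ since also $s\le k-1$, is treated by a direct inspection.

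The main obstacle is precisely this conversion step. The contributions of the nodes $T\setminus\{j_1,\dots,j_s\}$ to the repairs of $j_1,\dots,j_s$ do not vanish a priori (they are nonzero functions of the chosen $W_i$), so turning the functional relation into an honest low-support codeword requires a careful analysis of the resulting ``reduced'' repair linear system — one must show the residual degree of freedom in $W_i$ genuinely survives rather than being forced to $0$, which I expect to hinge on repeatedly invoking that all $d$ repair symbols of a single failed node are mutually independent (Lemma~\ref{conditional entropy}) to pin down the ranks of the rank-one repair blocks $R_p^{(j)}v_p^{(j)}$. As a consistency check — not a proof — one can also extract from Lemma~\ref{Secure size}, the mutual independence of distinct nodes' repair data, and Lemma~\ref{interesting1} the identity $H(S^F)=|F|\alpha+(k-|F|)H(S_i^F)$, so the theorem is equivalent to showing $H(S^F)$ attains its largest possible value $|F|(d+1-|F|)$; but proving that maximality runs into the same structural obstruction, so the MDS argument above is the route I would push through, leaving the invertibility of the reduced repair map as the one genuinely technical point.
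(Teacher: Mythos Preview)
Your proposal has a genuine gap at precisely the point you flag as ``the one genuinely technical point,'' and the difficulty there is not merely technical. Having chosen a nonzero $W_i$ in the common kernel of $v_i^{(j_1)},\dots,v_i^{(j_s)}$ and set $W_{P\setminus\{i\}}=0$, the resulting codeword is nonzero, so by MDS it has block-weight at least $d-k+2$; since exactly $k-1$ blocks (those in $P\setminus\{i\}$) are zero, \emph{every} block in $\{i\}\cup T$ is nonzero. In particular all of $W_{j_1},\dots,W_{j_s}$ and all of $W_{T\setminus\{j_1,\dots,j_s\}}$ are nonzero. The vanishing of $S_i^{j_t}$ says nothing about $W_{j_t}$, because the repair of $j_t$ also consumes nonzero contributions $S_p^{j_t}$ from every $p\in T\setminus\{j_t\}$. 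So the hoped-for codeword supported on $\{i,j_1,\dots,j_s\}$ simply does not arise from this construction, and there is no visible mechanism to produce one; the ``reduced repair system'' you allude to is neither defined nor shown to have the required rank deficiency. The boundary case $s=d-k+1$ is likewise left to an unspecified ``direct inspection.''

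The paper's proof avoids this obstruction entirely by an entropic double count rather than a codeword-support argument. Assuming a minimal $c$ with $H(S_i^{[k+1,k+c]})=H(S_i^{[k+1,k+c-1]})=(c-1)$, one evaluates $H(S^{[k+1,k+c]})-H(S^{[k+1,k+c-1]})$ in two ways: via Lemma~\ref{expre} it equals $H(S_{[1,k]\cup[k+c+1,d+1]}^{k+c}\mid S^{[k+1,k+c-1]})$, which the assumed dependence (propagated to every helper by Lemma~\ref{interesting1}) forces to $0$; but via Lemma~\ref{Secure size} and the identity $H(S^F)=|F|\alpha+(k-|F|)H(S_i^F)$ that you yourself recorded, the same difference equals $(d-k-c+2)\beta>0$ for $c\le\min\{k-1,d-k+1\}$. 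This contradiction needs no low-weight codeword and no analysis of a reduced repair map.
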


\begin{proof}
Assume $P=[1,k]$ and $T=[k+1,d+1]$. Without loss of generality, also assume $F=[k+1,k+c]$, where $c\geq 2$ (as it is trivial when $c=1$). In the linear MSR scenario, Lemma $\ref{integer}$ indicates that $H(S_i^A)$ has to be an integer for any subset $A\subseteq F$ and $i\notin A$.

By proof of contradiction, under the condition $\beta=1$, we assume $c$ is the smallest value satisfying that $H(S_1^{[k+1,k+c-1]})=H(S_1^{[k+1,k+c]})=(c-1)\beta$. Based on Lemma \ref{interesting1}, we know that for any $i\notin [k+1,k+c]$, it must be that $H(S_i^{[k+1,k+c-1]})=H(S_i^{[k+1,k+c]})=(c-1)\beta$, from which we further derive $H(S_i^{k+c}|S_i^{[k+1,k+c-1]})=0$. Then, following from Lemma \ref{expre}, we have that for any $j\in[1,d-k+1]$,
\begin{equation}\label{express}
\left\{\begin{aligned}
&H(S^{[k+1,k+j]})=H(S_{[1,k]\cup [k+2,d+1]}^{k+1}, S_{[1,k]\cup [k+3,d+1]}^{k+2}, \cdots, S_{[1,k]\cup [k+j+1,d+1]}^{k+j})\\
&H(S_{[1,k]\cup [k+j+1,d+1]}^{k+j}|S^{[k+1,k+j-1]})=H(S^{[k+1,k+j]})-H(S^{[k+1,k+j-1]})
\end{aligned}\right.
\end{equation}

\vspace{0.1cm}

In one way, since we have $H(S_i^{k+c}|S_i^{[k+1,k+c-1]})=0$ for any $i\notin [k+1,k+c]$ from the above assumption, we have
\begin{equation}\label{easy1}
\left\{\begin{aligned}
&H(S_{[1,k]\cup [k+c+1,d+1]}^{k+c}|S^{[k+1,k+c-1]})\\
&=H(S_{[1,k]\cup [k+c+1,d+1]}^{k+c}|S_{[1,k]\cup [k+2,d+1]}^{k+1}, S_{[1,k]\cup [k+3,d+1]}^{k+2}, \cdots, S_{[1,k]\cup [k+c,d+1]}^{k+c-1})\\
&\leq H(S_{[1,k]\cup [k+c+1,d+1]}^{k+c}|S_{[1,k]\cup [k+c+1,d+1]}^{[k+1,k+c-1]})\\
&=0.
\end{aligned}\right.
\end{equation}

In another way, we derive
\begin{equation}\label{easy2}
\left\{\begin{aligned}
&H(S_{[1,k]\cup [k+c+1,d+1]}^{k+c}|S^{[k+1,k+c-1]})\\
&=H(S^{[k+1,k+c]})-H(S^{[k+1,k+c-1]})\\
&=\{H(W_{[k+1,k+c]})+H(S_{G'}^{[k+1,k+c]})\}-\{H(W_{[k+1,k+c-1]})+H(S_{G''}^{[k+1,k+c-1]})\}\\
&=\{c\alpha+(k-c)(c-1)\beta\}-\{(c-1)\alpha+(k-c+1)(c-1)\beta\}\\
&=\alpha-(c-1)\beta\\
&=(d-k-c+2)\beta,
\end{aligned}\right.
\end{equation}
where
\begin{equation}
\left\{\begin{aligned}
&H(S^{[k+1,k+c]})=H(W_{[k+1,k+c]})+H(S_{G'}^{[k+1,k+c]})\\
&H(S^{[k+1,k+c-1]})=H(W_{[k+1,k+c-1]})+H(S_{G''}^{[k+1,k+c-1]})
\end{aligned}\right.
\end{equation}
result from Lemma \ref{Secure size} and $G'$, $G''$ are defined as in Lemma \ref{interesting1} with $|G'|=k-c$ and $|G''|=k-c+1$.

Now, we are to make comparison between equation (\ref{easy1}) and (\ref{easy2}), when $c\leq \min\{d-k+1,k-1\}$. Equation (\ref{easy2}) is a monotone decreasing function in the variable $c$, thus there are two cases as follows.

\vspace{0.2cm}

1. If $d-k+1\geq k-1$, when $c=k-1$, equation (\ref{easy2}) takes minimum value $\{d-2k+3\}\beta$ that is strictly greater than $0$.

\vspace{0.2cm}

2. If $d-k+1\leq k-1$, when $c=d-k+1$, equation (\ref{easy2}) reaches minimum value $\beta$ that is still positive.
\vspace{0.1cm}

To this end, both cases indicate that equation (\ref{easy2}) contradicts formula (\ref{easy1}), when $c\leq \min\{d-k+1,k-1\}$, i.e., the assumption that $H(S_i^{[k+1,k+c-1]})=H(S_i^{[k+1,k+c]})$ cannot hold. In other words, there does not exist such value $c$ that $H(S_i^{[k+1,k+c-1]})=H(S_i^{[k+1,k+c]})$, when $\beta=1$ and $c\leq \min\{d-k+1,k-1\}$. Therefore, we can claim that, for any $F$ such that $F\subseteq T$ and $|F|\leq k-1$, $H(S_i^F)=H(S_i^{[k+1,k+c]})=c\beta$.

\end{proof}

\vspace{0.2cm}

\begin{corollary}\label{Secure size further1}
In the linear MSR scenario, when $\beta=1$, we have
\begin{equation}\label{secure size further1}
B^{(s)}=(k-l_1-l_2)(\alpha-l_2\beta),
\end{equation}
where $l_1+l_2\leq k-1$ and $l_2\leq d-k+1$.
\end{corollary}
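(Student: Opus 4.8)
The plan is to combine Theorem~\ref{sim} with Theorem~\ref{third proof} directly. Recall that Theorem~\ref{sim} gives $B^{(s)}\leq (k-l_1-l_2)\big(\alpha-H(S_g^F)\big)$ for any $g\in G$ and $|F|=l_2$, and Remark~\ref{simple} (via Remark~\ref{secrecy achieve}) tells us that in the linear MSR scenario this bound is in fact achieved with equality through MRD pre-coding. So the corollary will follow once we pin down the exact value of $H(S_g^F)$ when $\beta=1$.

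First I would fix the configuration: since $l_1+l_2\leq k-1$ and $l_2\leq d-k+1$, I can choose disjoint subsets $E,F,G$ of $[1,d+1]$ with $|E|=l_1$, $|F|=l_2$, $|G|=k-l_1-l_2$, and then pick $P$ of size $k$ and $T$ of size $d-k+1$ with $P\cap T=\emptyset$ and $F\subseteq T$ — this is possible precisely because $|F|=l_2\leq d-k+1$. For any $g\in G$ we have $g\notin F$, so Theorem~\ref{third proof} applies verbatim and yields $H(S_g^F)=|F|\beta=l_2\beta$ (using $\beta=1$). Note $|F|=l_2\leq k-1$ is exactly the hypothesis $|F|\leq k-1$ needed in Theorem~\ref{third proof}, which is implied by $l_1+l_2\leq k-1$.

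Substituting $H(S_g^F)=l_2\beta$ into Theorem~\ref{sim} gives $B^{(s)}\leq (k-l_1-l_2)(\alpha-l_2\beta)$, and the matching lower bound (achievability) comes from the MRD pre-coding construction recalled in Remark~\ref{secrecy achieve}, which attains $B^{(s)} = B - H(W_E,S^F) = H(W_E,W_F,W_G\mid W_E,S^F)$; combining this with Lemma~\ref{Secure size} ($H(S^F\mid W_E,W_F)=H(S_G^F)$), Lemma~\ref{interesting1}, and the MSR property $H(W_G\mid W_E,W_F)=(k-l_1-l_2)\alpha$ collapses the achievable size to exactly $(k-l_1-l_2)(\alpha-l_2\beta)$. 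Hence the inequality is an equality.

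I do not expect a genuine obstacle here — the corollary is essentially a bookkeeping consequence of the machinery already assembled (Theorem~\ref{sim}, Theorem~\ref{third proof}, and the achievability remark). The only point requiring a line of care is verifying that the parameter ranges $l_1+l_2\leq k-1$ and $l_2\leq d-k+1$ are exactly what is needed to instantiate Theorem~\ref{third proof} with $F$ of size $l_2$ sitting inside a $(d-k+1)$-set $T$ disjoint from a $k$-set $P$; once that is checked, the rest is substitution. One might also remark that the scalar MSR code of \cite{Re:K.Rashmi} has $\beta=1$ and is stable (Lemma~\ref{identical}), so this precise value simultaneously settles the optimality question of \cite{Re:N.B.Shah} for product-matrix-based secure MSR codes in the stated range.
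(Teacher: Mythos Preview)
Your proposal is correct and follows essentially the same approach as the paper: invoke Remark~\ref{simple} (which in the linear case turns the bound of Theorem~\ref{sim} into an equality $B^{(s)}=(k-l_1-l_2)(\alpha-H(S_g^F))$) and then plug in $H(S_g^F)=l_2\beta$ from Theorem~\ref{third proof}. Your extra care in checking that $l_2\leq d-k+1$ allows $F$ to sit inside a $(d-k+1)$-set $T$ and that $|F|\leq k-1$ is implied by $l_1+l_2\leq k-1$ is a useful spelling-out of hypotheses the paper leaves implicit.
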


\begin{proof}
Remark \ref{simple} implies that $B^{(s)}=(k-l_1-l_2)\big(\alpha-H(S_g^F)\big)$ in the linear MSR scenario, where $g\notin F$. Combining it with Theorem \ref{third proof}, we obtain this corollary.
\end{proof}

\begin{corollary}\label{last}
The product-matrix-based secure MSR code given in \cite{Re:N.B.Shah} is optimal for any $l_1+l_2\leq k-1$ and $l_2\leq d-k+1$.
\end{corollary}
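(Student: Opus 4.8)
The plan is to derive Corollary~\ref{last} as a direct consequence of Corollary~\ref{Secure size further1} together with the known achievability result for the product-matrix-based secure MSR code in \cite{Re:N.B.Shah}. First I would recall that in \cite{Re:N.B.Shah} the authors exhibit an explicit secure MSR coding scheme, built by pre-coding the product-matrix MSR code of \cite{Re:K.Rashmi} with randomness, that securely stores a file of size $(k-l_1-l_2)(\alpha-l_2\beta)$. Since the product-matrix MSR code is a scalar code, it operates with $\beta=1$, and its parameters satisfy $d\ge 2k-2$, so in particular $d-k+1\ge k-1$; hence the constraints $l_1+l_2\le k-1$ and $l_2\le d-k+1$ reduce to just $l_1+l_2\le k-1$, which is exactly the eavesdropper-model assumption. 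Thus the scheme of \cite{Re:N.B.Shah} is a feasible secure linear MSR code attaining $B^{(s)}=(k-l_1-l_2)(\alpha-l_2\beta)$ for every admissible $(l_1,l_2)$.

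Next I would invoke Corollary~\ref{Secure size further1}, which establishes that for \emph{any} linear MSR code with $\beta=1$ and parameters satisfying $l_1+l_2\le k-1$ and $l_2\le d-k+1$, the secrecy capacity is exactly $(k-l_1-l_2)(\alpha-l_2\beta)$. This is an upper bound that applies to all such codes, in particular to the product-matrix-based construction. Putting the two facts together: the product-matrix scheme of \cite{Re:N.B.Shah} stores a file meeting this upper bound, so it is optimal; moreover, it settles in the affirmative the open question of \cite{Re:N.B.Shah} as to whether the scheme remains optimal when $l_2\ge 1$.

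I would then add a brief remark verifying that the parameter hypotheses of Corollary~\ref{Secure size further1} are genuinely met by the product-matrix code, so that no case is left uncovered: because $\beta=1$ here, Theorem~\ref{third proof} applies, giving $H(S_g^F)=l_2\beta=l_2$, and since $d\ge 2k-2$ the side condition $l_2\le d-k+1$ is automatically implied by $l_2\le k-1$. Hence the characterization is tight precisely on the full range $l_1+l_2\le k-1$ that the eavesdropper model permits.

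The argument itself is short and essentially bookkeeping; the only thing to be careful about is the matching of parameter regimes --- one must check that the constraint $l_2\le d-k+1$ imposed in Corollary~\ref{Secure size further1} does not shrink the range below what \cite{Re:N.B.Shah} claims, which is where the product-matrix constraint $d\ge 2k-2$ is used. There is no real obstacle; the substance was already done in Theorem~\ref{third proof} and Corollary~\ref{Secure size further1}, and this corollary is just the payoff statement. A one- or two-sentence proof citing those two results plus the achievability in \cite{Re:N.B.Shah} suffices.
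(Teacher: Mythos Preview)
Your proposal is correct and follows essentially the same route as the paper: note that the product-matrix MSR code has $\beta=1$, cite the achievability $B^{(s)}=(k-l_1-l_2)(\alpha-l_2\beta)$ from \cite{Re:N.B.Shah}, and invoke Corollary~\ref{Secure size further1} for the matching converse. The only substantive difference is that the paper also points out the product-matrix code is \emph{stable}, so that the $n=d+1$ characterization of Corollary~\ref{Secure size further1} transfers to the general-$n$ product-matrix construction via Lemma~\ref{identical}; you should mention this (or appeal to Note~1) to close that small gap, while your observation that $d\ge 2k-2$ renders the side condition $l_2\le d-k+1$ automatic is a nice clarification the paper leaves implicit.
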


\begin{proof}
First, the product-matrix-based MSR codes constructed in \cite{Re:K.Rashmi} is established on $\beta=1$ and is a
\emph{stable} MSR code as stated in Remark \ref{stable}. Then, according to the construction of secure MSR codes in \cite{Re:N.B.Shah}, the $(l_1,l_2)$-secure MSR code achieves
\begin{equation}
B^{(s)}=(k-l_1-l_2)(\alpha-l_2\beta).
\end{equation}
Thus, the secrecy capacity of secure MSR codes in \cite{Re:N.B.Shah} exactly complies with that given in Corollary \ref{Secure size further1}.
\end{proof}

\begin{rem}
Actually, Corollary \ref{Secure size further1} is applicable to all linear scalar MSR codes, i.e., linear MSR codes with $\beta=1$. In other words, by MRD code's pre-coding as stated in Remark \ref{secrecy achieve}, all linear scalar secure MSR codes can offer this secrecy capacity with precise value given in Corollary \ref{Secure size further1}.
\end{rem}

\subsubsection{4.1.2 The situations when $1\leq\beta<\frac{d-k+1}{l_2-1}$.}

\begin{theorem}\label{fourth proof}
In the linear MSR scenario, for any subsets $P$ and $T$ where $\{|P|=k,|T|=d-k+1,P\cap T=\emptyset\}$, any $F$ such that $F\subseteq T$ and $|F|\leq k-1$, and arbitrary $i\notin F$, when $\beta<\frac{d-k+1}{|F|-1}$ or $|F|<1+\frac{d-k+1}{\beta}$, we have $H(S_{i}^F)=|F|\beta$ where $\beta>1$.
\end{theorem}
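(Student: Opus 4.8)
The plan is to run the contradiction argument of Theorem~\ref{third proof} with one genuinely new ingredient: when $\beta>1$ the first breakdown of independence among repair data need not be a total collapse (a drop of exactly $\beta$) but may be a partial drop of any integer size, so the qualitative comparison that sufficed for $\beta=1$ must be replaced by a quantitative one, and it is exactly the threshold $\beta<\frac{d-k+1}{|F|-1}$ that makes the quantitative version close.

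First I would normalize the setup as in Theorem~\ref{third proof}: relabel so that $P=[1,k]$, $T=[k+1,d+1]$ and $F=[k+1,k+c]$ with $c=|F|$; the case $c=1$ is immediate since $H(S_i^{k+1})=\beta$, so assume $c\ge 2$. By Lemma~\ref{integer} each $H(S_i^A)$ with $A\subseteq F$, $i\notin A$ is an integer, and subadditivity gives $H(S_i^{[k+1,k+j]})\le j\beta$. Suppose, for contradiction, that $H(S_i^F)<c\beta$ for $i\notin F$; by Lemma~\ref{interesting1} this quantity is the same for all $i\notin F$, so I may choose $c$ minimal with $H(S_i^{[k+1,k+c]})<c\beta$. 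Minimality forces $H(S_i^{[k+1,k+c-1]})=(c-1)\beta$, and integrality lets me write $H(S_i^{[k+1,k+c]})=(c-1)\beta+r$ with $0\le r\le\beta-1$; hence $H(S_i^{k+c}|S_i^{[k+1,k+c-1]})=r$ for every $i\notin[k+1,k+c]$.

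Next I would evaluate $X:=H\big(S_{[1,k]\cup[k+c+1,d+1]}^{k+c}\,|\,S^{[k+1,k+c-1]}\big)$ in two ways. For an upper bound, subadditivity of conditional entropy together with the fact that conditioning on $S^{[k+1,k+c-1]}$ is at least as restrictive as conditioning on $S_i^{[k+1,k+c-1]}$, and that $\big|[1,k]\cup[k+c+1,d+1]\big|=d-c+1$, give $X\le(d-c+1)r$. For an exact value, the remark following Lemma~\ref{expre} gives $X=H(S^{[k+1,k+c]})-H(S^{[k+1,k+c-1]})$; then Lemma~\ref{Secure size} applied at both levels (legitimate because the node sets $F\cup G'$ involved have size exactly $k$), together with the independence of node contents over any $k$ nodes and with Lemma~\ref{interesting1}, yields $H(S^{[k+1,k+c]})=c\alpha+(k-c)\big[(c-1)\beta+r\big]$ and $H(S^{[k+1,k+c-1]})=(c-1)\alpha+(k-c+1)(c-1)\beta$; subtracting and substituting $\alpha=(d-k+1)\beta$ leaves $X=(d-k-c+2)\beta+(k-c)r$. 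Comparing the two expressions gives $(d-k-c+2)\beta\le(d-k+1)r$, i.e. $r\ge\beta-\frac{(c-1)\beta}{d-k+1}$. Since $c\le|F|$, the hypothesis $\beta<\frac{d-k+1}{|F|-1}$ implies $(c-1)\beta<d-k+1$, so $r>\beta-1$, contradicting $r\le\beta-1$. Hence no such $c$ exists and $H(S_i^F)=|F|\beta$.

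I expect the main obstacle to be exactly this ``partial collapse''. When $\beta=1$ the minimal counterexample forces $r=0$, the upper bound on $X$ is simply $0$, and the contradiction with the strictly positive exact value $X=(d-k-c+2)\beta$ is instantaneous; when $\beta>1$ both evaluations of $X$ are non-trivial and one has to carry the integer $r$ through the whole computation, pinning it to $[0,\beta-1]$ via minimality and integrality and then verifying that the resulting linear inequality $r\ge\beta-\frac{(c-1)\beta}{d-k+1}$ is defeated precisely under the stated bound on $\beta$. The rest is routine bookkeeping: confirming the cardinality $d-c+1$ of the helper index set, and checking that Lemmas~\ref{interesting1} and~\ref{Secure size} apply at every level because $c\le|F|\le k-1$.
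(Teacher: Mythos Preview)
Your proposal is correct and follows essentially the same route as the paper: the same normalization, the same minimal-$c$ contradiction setup with an integer defect $r$ (the paper's $\theta$), the same two-way evaluation of $H(S_{[1,k]\cup[k+c+1,d+1]}^{k+c}\mid S^{[k+1,k+c-1]})$ via Lemmas~\ref{expre}, \ref{Secure size}, \ref{interesting1}, and the same final inequality $(d-k+1)r\ge(d-k-c+2)\beta$ that clashes with $r\le\beta-1$ under the hypothesis $\beta<\frac{d-k+1}{|F|-1}$. Your write-up is in fact slightly more explicit than the paper's about why $c\le|F|$ lets the hypothesis in terms of $|F|$ transfer to the minimal level $c$.
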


\begin{proof}
Similar to Theorem \ref{third proof}, we assume $P=[1,k],T=[k+1,d+1]$ and $F=[k+1,k+c]$ where $c\geq 2$.

By proof of contradiction, we assume $c$ is the smallest value such that $H(S_1^{[k+1,k+c-1]})=(c-1)\beta$ and $H(S_1^{[k+1,k+c]})=(c-1)\beta+\theta$, where $\theta\in[0,\beta-1]$ and $\theta$ must be an integer following from Lemma \ref{integer}, when $\beta>1$. From Lemma \ref{interesting1}, we know $H(S_i^{[k+1,k+c-1]})=(c-1)\beta$ and $H(S_i^{[k+1,k+c]})=(c-1)\beta+\theta$, where $\theta\in \mathbb{Z}\cap[0,\beta-1]$ for any $i\notin [k+1,k+c]$, from which we further have $H(S_i^{k+c}|S_i^{[k+1,k+c-1]})=\theta$.

\vspace{0.2cm}

Due to the similar way of Lemma \ref{expre} used in the proof of Theorem \ref{third proof}, we first have
\begin{equation}\label{easy3}
\left\{\begin{aligned}
&H(S_{[1,k]\cup [k+c+1,d+1]}^{k+c}|S^{[k+1,k+c-1]})\\
&=H(S^{[k+1,k+c]})-H(S^{[k+1,k+c-1]})\\
&=\{H(W_{[k+1,k+c]})+H(S_{G'}^{[k+1,k+c]})\}-\{H(W_{[k+1,k+c-1]})+H(S_{G''}^{[k+1,k+c-1]})\}\\
&=\{c\alpha+(k-c)[(c-1)\beta+\theta]\}-\{(c-1)\alpha+(k-c+1)(c-1)\beta\}\\
&=(d-k-c+2)\beta+(k-c)\theta.
\end{aligned}\right.
\end{equation}

In another way, we obtain
\begin{equation}\label{easy4}
\left\{\begin{aligned}
&H(S_{[1,k]\cup [k+c+1,d+1]}^{k+c}|S^{[k+1,k+c-1]})\\
&=H(S_1^{k+c}|S^{[k+1,k+c-1]})+H(S_2^{k+c}|S^{[k+1,k+c-1]}, S_1^{k+c})+\cdots+H(S_{d+1}^{k+c}|S^{[k+1,k+c-1]},S_{[1,k]\cup [k+c+1,d]}^{k+c})\\
&\leq H(S_1^{k+c}|S_1^{[k+1,k+c-1]})+H(S_2^{k+c}|S_2^{[k+1,k+c-1]})+\cdots+H(S_{d+1}^{k+c}|S_{d+1}^{[k+1,k+c-1]})\\
&=(d-c+1)\theta.
\end{aligned}\right.
\end{equation}

Thus, if $(d-c+1)\theta<(d-k-c+2)\beta+(k-c)\theta$, i.e., $(d+1-k)\theta<(d-k-c+2)\beta$, contradiction arises. Particularly, when $\theta=\beta-1$, $(d+1-k)\theta$ reaches maximum $(d+1-k)(\beta-1)$. By simplification, we obtain that, when $\beta<\frac{d-k+1}{c-1}$ or $c<1+\frac{d-k+1}{\beta}$, equations (\ref{easy3}) contradicts formula (\ref{easy4}), which means the assumption that $H(S_1^{[k+1,k+c-1]})=(c-1)\beta$ and $H(S_1^{[k+1,k+c]})=(c-1)\beta+\theta$ cannot hold, when $\theta\in \mathbb{Z}\cap[0,\beta-1]$. That is to say, the value of $\theta$ here can only be exactly taken by $\beta$.

Therefore, for any $F$ such that $F\subseteq T$ and $|F|\leq k-1$, when $\beta<\frac{d-k+1}{c-1}$ or $c<1+\frac{d-k+1}{\beta}$, we have $H(S_i^F)=H(S_i^{[k+1,k+c]})=c\beta$.

\end{proof}

\begin{corollary}\label{Secure size further2}
In the linear MSR scenario, when $\beta\geq 1$, we still have
\begin{equation}\label{secure size further2}
B^{(s)}=(k-l_1-l_2)(\alpha-l_2\beta),
\end{equation}
when $l_1+l_2\leq k-1$ and $l_2<1+\frac{d-k+1}{\beta}$ or $\beta<\frac{d-k+1}{l_2-1}$.
\end{corollary}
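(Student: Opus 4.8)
The plan is to repeat, almost verbatim, the argument behind Corollary~\ref{Secure size further1}, this time feeding in Theorem~\ref{fourth proof} (and Theorem~\ref{third proof} for the boundary value $\beta=1$) in place of Theorem~\ref{third proof} alone. By Remark~\ref{simple}, in the linear MSR scenario the universal bound of Theorem~\ref{sim} is attained with equality through the MRD pre-coding of Remark~\ref{secrecy achieve}, so that $B^{(s)}=(k-l_1-l_2)\big(\alpha-H(S_g^F)\big)$, where $g\in G$, $|G|=k-l_1-l_2$, and $|F|=l_2$. Hence everything reduces to establishing that $H(S_g^F)=l_2\beta$ under the stated constraints on $l_1,l_2,\beta$.

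Next I would check that a configuration meeting the hypotheses of Theorem~\ref{fourth proof} exists. Since $\beta\geq 1$, the assumption $l_2<1+\frac{d-k+1}{\beta}$ forces $l_2\leq d-k+1$, while $l_1+l_2\leq k-1$ forces $l_2\leq k-1$; and $|[1,d+1]|=k+(d-k+1)$. Therefore one can select disjoint subsets $P,T\subseteq[1,d+1]$ with $|P|=k$ and $|T|=d-k+1$, embed $F$ of size $l_2$ inside $T$ (legitimate because $l_2\leq\min\{k-1,d-k+1\}$), and pick $g\in P$ so that $g\notin F$. With $c=|F|=l_2$, the constraint $\beta<\frac{d-k+1}{l_2-1}$ (equivalently $l_2<1+\frac{d-k+1}{\beta}$) is exactly the hypothesis of Theorem~\ref{fourth proof}, which then gives $H(S_g^F)=l_2\beta$ when $\beta>1$; the same identity for $\beta=1$ is Theorem~\ref{third proof}. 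Note that, by Lemma~\ref{interesting1}, $H(S_g^F)$ does not depend on the particular choice of $g\notin F$, so computing it for this convenient $g$ suffices. Substituting into the expression for $B^{(s)}$ above yields $B^{(s)}=(k-l_1-l_2)(\alpha-l_2\beta)$.

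The real work has already been absorbed into Theorem~\ref{fourth proof}, whose proof-by-contradiction crucially exploits the integrality of $H(S_i^A)$ from Lemma~\ref{integer} to pin the increment $\theta$ down to the full value $\beta$. Consequently, the only point needing attention here is the parameter bookkeeping: verifying that $F$ can be realized as a subset of a size-$(d-k+1)$ helper set that is disjoint from a size-$k$ reconstruction set containing $g$, and that the two side conditions $l_1+l_2\leq k-1$ and $l_2<1+\frac{d-k+1}{\beta}$ are jointly consistent with such a placement — which the counting above confirms. I do not anticipate any genuine obstacle beyond this routine check.
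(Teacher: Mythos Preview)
Your proposal is correct and follows essentially the same approach as the paper: invoke Remark~\ref{simple} to get $B^{(s)}=(k-l_1-l_2)\big(\alpha-H(S_g^F)\big)$ in the linear case, then apply Theorems~\ref{third proof} and~\ref{fourth proof} to evaluate $H(S_g^F)=l_2\beta$. The paper's own proof is a single sentence pointing to these ingredients; your added parameter bookkeeping (verifying $l_2\leq\min\{k-1,d-k+1\}$ so that $F$ fits inside a size-$(d-k+1)$ set $T$) merely makes explicit what the paper leaves implicit.
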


\begin{proof}
Combining Theorem \ref{third proof} and Theorem \ref{fourth proof}, this corollary can be derived as Corollary \ref{Secure size further1}.
\end{proof}

\begin{rem}
In this category, achievablity can be attributed to that $H(S_g^F)$ exactly reaches the maximal value $l_2\beta$, when $l_2<1+\frac{d-k+1}{\beta}$. In other words, there does not exist the intersection pattern (dependence) within $S_g^F$ in this category, i.e., all repair data included in $S_g^F$ are mutually independent. However, sometimes $H(S_g^F)$ cannot be exactly calculated and only can be estimated, which will be shown next.
\end{rem}

\subsection{Category 2: Upper Bounds on Secrecy Capacity}
In the other situations when $\beta\geq \frac{d-k+1}{l_2-1}$, we cannot exactly calculate the value of $H(S_g^F)$. Instead, we can only estimate the range of value that $H(S_g^F)$ can be taken from.

\subsubsection{4.2.1 The situations when $l_2=t+1, \frac{d-k+1}{t}\leq\beta<\frac{d-k+1}{t-1}$.}

\begin{theorem}\label{discover}
Given a linear MSR code, for $l_1+l_2\leq k-1$ and $l_2=t+1$, when $\frac{d-k+1}{t}\leq\beta<\frac{d-k+1}{t-1}$, we have
\begin{equation}
B^{(s)}=(k-l_1-l_2)\big(\alpha-\pi(\beta,l_2)\big),
\end{equation}
where $\pi(\beta,l_2)=H(S_g^F)\geq t\beta+\frac{d-k-t+1}{d-k+1}\beta$ and $t\leq d-k+1$.
\end{theorem}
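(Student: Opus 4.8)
The plan is to repeat the contradiction argument of Theorem~\ref{fourth proof} but to track the ``deficiency'' $\theta$ more carefully, since now the hypothesis $\beta\ge\frac{d-k+1}{t-1}$ no longer forces $\theta=\beta$. As before, set $P=[1,k]$, $T=[k+1,d+1]$, $F=[k+1,k+l_2]=[k+1,k+t+1]$, and by Lemma~\ref{integer} write $H(S_1^{[k+1,k+t]})=(c-1)\beta$ and $H(S_1^{F})=t\beta+\theta$ for some integer $\theta\in[0,\beta]$; by Lemma~\ref{interesting1} the same values hold for $H(S_i^{\cdot})$ for every $i\notin F$, and hence $H(S_i^{k+t+1}\mid S_i^{[k+1,k+t]})=\theta$. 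First I would re-run the two chain-rule computations of Theorem~\ref{fourth proof} with $c=t+1$: one direction, via Lemma~\ref{expre} and Lemma~\ref{Secure size}, yields
\begin{equation}
H(S_{[1,k]\cup[k+t+2,d+1]}^{k+t+1}\mid S^{[k+1,k+t]})=(d-k-t+1)\beta+(k-t-1)\theta,
\end{equation}
while the per-node splitting direction gives the upper bound $(d-t)\theta$. Comparing these forces $(d-k-t+1)\beta\le (d+1-k)\theta$, i.e.
\begin{equation}
\theta\ \ge\ \frac{d-k-t+1}{d-k+1}\,\beta .
\end{equation}

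Then I would assemble the value of $H(S_g^F)$. Since $\theta$ is the increment $H(S_g^F)-H(S_g^{[k+1,k+t]})$ and, by the sub-case hypothesis $\frac{d-k+1}{t}\le\beta<\frac{d-k+1}{t-1}$, Theorem~\ref{fourth proof} (or its proof applied to the set $[k+1,k+t]$ of size $t$, for which $\beta<\frac{d-k+1}{t-1}$ does hold) gives $H(S_g^{[k+1,k+t]})=t\beta$. Combining,
\begin{equation}
\pi(\beta,l_2)=H(S_g^F)=t\beta+\theta\ \ge\ t\beta+\frac{d-k-t+1}{d-k+1}\,\beta ,
\end{equation}
which is exactly the claimed bound. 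The equality $B^{(s)}=(k-l_1-l_2)\big(\alpha-H(S_g^F)\big)$ then follows from Remark~\ref{simple} (Theorem~\ref{sim}) together with the MRD pre-coding achievability of Remark~\ref{secrecy achieve}, valid in the linear scenario; note that $t\le d-k+1$ is inherited from $F\subseteq T$.

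The main obstacle I anticipate is the second direction of the chain-rule estimate: bounding $H(S_{[1,k]\cup[k+t+2,d+1]}^{k+t+1}\mid S^{[k+1,k+t]})$ from above by the ``separated'' quantity $\sum_i H(S_i^{k+t+1}\mid S_i^{[k+1,k+t]})=(d-t)\theta$ requires dropping cross-conditioning between the repair symbols sent to $k+t+1$ from different helper nodes, and one must check this is legitimate — it is, because conditioning only reduces entropy, but the bookkeeping of which index ranges of helpers appear (there are $d-t$ of them: the $k$ nodes of $P$ minus nothing, plus $[k+t+2,d+1]$, minus the already-counted ones) must be done precisely, as an off-by-one there changes the constant in the bound. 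A secondary subtlety is justifying $H(S_g^{[k+1,k+t]})=t\beta$ at the boundary $\beta=\frac{d-k+1}{t}$, where Theorem~\ref{fourth proof}'s strict inequality $\beta<\frac{d-k+1}{t-1}$ is what one actually needs for a size-$t$ set, so this is consistent; still, the interplay of the two regimes at the endpoints should be stated carefully.
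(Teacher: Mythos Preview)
Your proposal is correct and follows essentially the same route as the paper: both invoke Corollary~\ref{Secure size further2} (equivalently Theorem~\ref{fourth proof}) to secure $H(S_g^{[k+1,k+t]})=t\beta$ under the hypothesis $\beta<\frac{d-k+1}{t-1}$, then replay the two estimates (\ref{easy3})--(\ref{easy4}) with $c=t+1$ to obtain $(d-k-t+1)\beta+(k-t-1)\theta\le(d-t)\theta$, hence $\theta\ge\frac{d-k-t+1}{d-k+1}\beta$, and finish via Remark~\ref{simple} and Remark~\ref{secrecy achieve}. Aside from the harmless slip of writing ``$(c-1)\beta$'' before declaring $c=t+1$, your bookkeeping of the $d-t$ helper indices and the boundary case $\beta=\frac{d-k+1}{t}$ is handled exactly as in the paper.
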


\begin{proof}
It basically follows from formulas (\ref{easy3}) and (\ref{easy4}) in Theorem \ref{fourth proof}.

When $\frac{d-k+1}{t}\leq\beta<\frac{d-k+1}{t-1}$, Corollary \ref{Secure size further2} leads to that $\pi(\beta,t)=t\beta$. According to the proof of Theorem \ref{fourth proof}, for any $i\notin [k+1,k+t+1]$, we have
\begin{equation}
\left\{\begin{aligned}
&H(S_i^{[k+1,k+t]})=t\beta\\
&H(S_i^{[k+1,k+t+1]})=t\beta+\theta,
\end{aligned}\right.
\end{equation}
where $\theta\in \mathbb{Z}\cap[0,\beta]$. Because $\beta\geq\frac{d-k+1}{t}$, when setting $\theta=\beta-1$, we have $(d+1-k)(\beta-1)\geq(d-k-t+1)\beta$, from which we cannot obtain contradiction by formula (\ref{easy3}) and (\ref{easy4}). With them, we can only derive that $\frac{d-k-t+1}{d-k+1}\beta\leq\theta\leq\beta$. Thus, we obtain
\begin{equation}\label{pi bound}
\left\{\begin{aligned}
&B^{(s)}\\
&=(k-l_1-l_2)\big(\alpha-\pi(\beta,l_2)\big)\\
&\leq (k-l_1-l_2)\{\alpha-(t\beta+\frac{d-k-t+1}{d-k+1}\beta)\},
\end{aligned}\right.
\end{equation}
where the equation of the first step follows from Remark \ref{simple} and the inequality in the second step results from that $\pi(\beta,l_2)=t\beta+\theta \geq t\beta+\frac{d-k-t+1}{d-k+1}\beta$.

\end{proof}

\begin{rem}\label{syst}
Our focus in this paper is studying the secrecy capacity of MSR codes that can efficiently repair all nodes under the eavesdropper model with $F\in[1,d+1]$. Unlike Category 1, tightness of the bounds in Theorem \ref{discover} stays unclear. In \cite{Re:Rawat.A.S}, the authors considered using Zigzag code \cite{Re:I.Tamo} to construct secure MSR code that can attain the upper bound $B^{(s)}\leq (k-l_1-l_2)\big(\alpha-(2\beta-\frac{\beta}{n-k})\big)$, where $\alpha=(n-k)^{k}$ and $|F|=l_2=2$. However, Zigzag code \cite{Re:I.Tamo} is a systematic MSR code allowing efficient repair of systematic nodes only and the secure Zigzag code designed in \cite{Re:Rawat.A.S} is established on the premise that the eavesdropper gains access to the repair data of $l_2$ systematic nodes, i.e., $F\in[1,k]$.

Nevertheless, the simple and generally applicable upper bound $B^{(s)}\leq (k-l_1-l_2)\big(\alpha-H(S_g^F)\big)$ given in our Theorem \ref{sim} in fact also applies to systematic MSR codes, only requiring that $F\in[1,k]$. First, it is clear that the universal upper bound on secrecy capacity for any regenerating code $B^{(s)}\leq H(W_G|W_E,W_F)-H(S^F|W_E,W_F)$ in Lemma \ref{secure expression} is applicable to systematic MSR codes, since they still have the reconstruction property and regeneration property of systematic nodes $[1,k]$. Further due to their minimum storage feature, it can be similarly derived that $H(W_G|W_E,W_F)=H(W_G)=(k-l_1-l_2)\alpha$. Second, based on the fact that systematic MSR codes have the same parameter setting $\alpha=(d-k+1)\beta$, one can check that Lemma \ref{conditional entropy}, Lemma \ref{Secure size} and Lemma \ref{interesting1} all apply to systematic MSR codes as well. Thus, Theorem \ref{sim} can be applied to systematic MSR codes wherein $F\in[1,k]$.

In the linear MSR scenario, although we assume $F=[k+1,k+l_2]$ in the proof of Theorem \ref{third proof}, Theorem \ref{fourth proof} and Theorem \ref{discover}, they actually all are applicable to linear systematic MSR codes, because there does not restrict $F$ to be necessarily included in $[k+1,d+1]$ in their conditions. To this end, systematic MSR codes are supposed to formally share the same secrecy capacity with MSR codes that efficiently repair all nodes. Consequently, the bound in Theorem \ref{discover} also applies to linear systematic MSR codes and actually is consistent with the bound given in \cite{Re:Rawat.A.S} for certain situation.

The secure Zigzag code present in \cite{Re:Rawat.A.S} is designed by MRD code's pre-coding (Gabidulin code \cite{Re:Gabidulin}) and is built on $\alpha=(n-k)^{k}$ and $l_2=2$. For Zigzag codes, when a systematic node is failed, the remaining $k-1$ systematic nodes and all the $n-k$ parity nodes are required to participate in repair, which implies that $d=n-1$. Thus, we have that $\alpha=(d-k+1)^{k}$ and $\beta=(d-k+1)^{k-1}\geq d-k+1$. According to our Theorem \ref{discover}, we find $t=1$ satisfies the condition as $\beta=(d-k+1)^{k-1}\geq d-k+1$, which results in that $\pi(\beta,2)\geq \beta+\frac{d-k}{d-k+1}\beta=2\beta-\frac{\beta}{d-k+1}$. It exactly equals to $2\beta-\frac{\beta}{n-k}$, the corresponding result of Corollary 16 given in \cite{Re:Rawat.A.S}. Furthermore, Corollary 16 in \cite{Re:Rawat.A.S} is apparently included in our Theorem \ref{discover}, since it is not only applicable to the situation $t=1$.

\end{rem}

\subsubsection{4.2.2 The situations when $l_2=t+e, e\geq1, \frac{d-k+1}{t}\leq\beta<\frac{d-k+1}{t-1}$.}

\begin{theorem}\label{discover1}
Given a linear MSR code, for $l_1+l_2\leq k-1$ and $l_2=t+e$, when $\frac{d-k+1}{t}\leq\beta<\frac{d-k+1}{t-1}$, we have
\begin{equation}\label{expl}
B^{(s)}=(k-l_1-l_2)(\alpha-\pi(\beta,l_2)),
\end{equation}
where $\pi(\beta,l_2)=H(S_g^F)\geq t\beta+\beta(d-k-t+1)\big{[}1-(\frac{d-k}{d-k+1})^{e} \big{]}$ with $t\leq d-k+1$ and $e\geq 1$.
\end{theorem}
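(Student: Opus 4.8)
The plan is to iterate the two-sided entropy estimate already used in Theorem~\ref{discover} (that is, formulas (\ref{easy3}) and (\ref{easy4}) from the proof of Theorem~\ref{fourth proof}), peeling the failed nodes of $F$ off one at a time, and then to solve the recursion that results. Without loss of generality I would take $P=[1,k]$, $T=[k+1,d+1]$ and $F=[k+1,k+t+e]$; since $l_1+l_2\le k-1$ we have $t+e\le k-1$, so every auxiliary helper set used below is non-empty. By Lemma~\ref{interesting1} the quantity $H_j:=H\big(S_i^{[k+1,k+t+j]}\big)$ is independent of the choice of $i\notin[k+1,k+t+j]$, by Lemma~\ref{integer} it is an integer, and by Corollary~\ref{Secure size further2} (this is where $\beta<\frac{d-k+1}{t-1}$ is needed) the starting value is $H_0=t\beta$. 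Putting $\theta_j:=H_j-H_{j-1}=H\big(S_i^{k+t+j}\mid S_i^{[k+1,k+t+j-1]}\big)$ for $j=1,\dots,e$, the task becomes to lower-bound $H(S_g^F)=H_e=t\beta+\sum_{j=1}^{e}\theta_j$.

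For each $j$ I would run the argument of Theorem~\ref{fourth proof} with the failed set $[k+1,k+t+j]$ in the role of $[k+1,k+c]$. On the one hand, Lemma~\ref{expre}, Lemma~\ref{Secure size}, the minimum-storage identity $H(W_{[k+1,k+t+j]})=(t+j)\alpha$, and the additivity $H\big(S_{G'}^{[k+1,k+t+j]}\big)=|G'|\,H_j$ over a helper set $G'$ with $|G'|=k-(t+j)$ (the same kind of additivity used inside the proof of Lemma~\ref{interesting1}, valid because the contents of any $k$ nodes are mutually independent) give
\[
H\big(S_{[1,k]\cup[k+t+j+1,d+1]}^{k+t+j}\mid S^{[k+1,k+t+j-1]}\big)=(d-k-t+1)\beta+(k-t-j)\theta_j-\sum_{l=1}^{j-1}\theta_l .
\]
On the other hand, the chain rule, Lemma~\ref{helpful} and Lemma~\ref{interesting1} bound the same conditional entropy above by $(d-t-j+1)\theta_j$, one summand per helper node of $k+t+j$. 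Comparing the two and cancelling the $(k-t-j)\theta_j$ terms leaves
\[
\theta_j\ \ge\ \frac{(d-k-t+1)\beta-\sum_{l=1}^{j-1}\theta_l}{\,d-k+1\,}.
\]

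The last step is to read off the closed form. With $m:=d-k+1$, $s:=d-k-t+1$, $\Sigma_j:=\sum_{l=1}^{j}\theta_l$ and $L_j:=s\beta\big[1-(\frac{m-1}{m})^{j}\big]$, the inequality above says $\theta_j\ge\max\{0,(s\beta-\Sigma_{j-1})/m\}$; hence $\Sigma_j\ge\frac{m-1}{m}\Sigma_{j-1}+\frac{s\beta}{m}$ when $\Sigma_{j-1}\le s\beta$, and $\Sigma_j\ge\Sigma_{j-1}>s\beta>L_j$ otherwise, so in either case a short induction gives $\Sigma_j\ge L_j$ and in particular $\Sigma_e\ge L_e=(d-k-t+1)\beta\big[1-(\frac{d-k}{d-k+1})^{e}\big]$. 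Therefore $H(S_g^F)=t\beta+\Sigma_e\ge t\beta+(d-k-t+1)\beta\big[1-(\frac{d-k}{d-k+1})^{e}\big]$; and since Theorem~\ref{sim} together with the MRD pre-coding of Remark~\ref{secrecy achieve} yields $B^{(s)}=(k-l_1-l_2)\big(\alpha-H(S_g^F)\big)$ with $\pi(\beta,l_2):=H(S_g^F)$ as in the statement, substituting the bound just obtained finishes the proof; the case $e=1$ recovers Theorem~\ref{discover}.

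I expect the recursion-solving in the final paragraph to be the subtle part: the lower bound on $\theta_j$ involves the full partial sum $\Sigma_{j-1}$ rather than $\theta_{j-1}$ alone, so the $\theta_j$ cannot be controlled one at a time — one has to recognise that the extremal behaviour is exactly the geometric decay $\Sigma_j\approx s\beta\big(1-(\frac{m-1}{m})^{j}\big)$, while also keeping an eye on the integrality of the $\theta_j$ from Lemma~\ref{integer} and on the degenerate branch $\Sigma_{j-1}>s\beta$. By contrast, the entropy manipulations that produce the two inequalities are routine once the template of Theorems~\ref{fourth proof} and~\ref{discover} is in place.
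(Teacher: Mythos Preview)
Your proposal is correct and follows essentially the same route as the paper: both iterate the two-sided estimate of (\ref{easy3})--(\ref{easy4}) over the last $e$ failed nodes, deduce $(d-k+1)\theta_j\ge(d-k-t+1)\beta-\Sigma_{j-1}$, rewrite this as the recursion $\Sigma_j\ge\frac{d-k}{d-k+1}\Sigma_{j-1}+\frac{(d-k-t+1)\beta}{d-k+1}$ for the partial sums, and solve it by induction to obtain the geometric form. Your explicit handling of the branch $\Sigma_{j-1}>s\beta$ is a harmless refinement (the paper's recursion already absorbs it since the affine map $x\mapsto\frac{m-1}{m}x+\frac{s\beta}{m}$ is increasing with fixed point $s\beta$), but otherwise the arguments coincide.
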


\begin{proof}
Without loss of any generality, we assume the set $F$ is $[1,t+e]$, where $t+e+l_1\leq k-1$. According to Lemma \ref{interesting1}, we know that for any $i\notin[1,t+e]$, $H(S_i^{[1,t+e]})$ is invariant.

Due to $\frac{d-k+1}{t}\leq\beta<\frac{d-k+1}{t-1}$ and Corollary \ref{Secure size further2}, we have
\begin{equation}\label{exp}
\left\{\begin{aligned}
&H(S_i^{[1,t+e]})\\
&=t\beta+H(S_i^{t+1}|S_i^{[1,t]})+\cdots+H(S_i^{t+e}|S_i^{[1,t+e-1]})\\
&=t\beta+\theta_1+\cdots+\theta_e,
\end{aligned}\right.
\end{equation}
where $H(S_i^{t+j}|S_i^{[1,t+j-1]})=\theta_j$ and $\theta_j\in\mathbb{Z}\cap[0,\beta]$, for $j\in[1,e]$. Still by Lemma \ref{expre}, $H(S^{[1,t+e]})$ can be expressed as
\begin{equation}\label{easy6}
H(S^{[1,t+e]})=H(S_{[2,d+1]}^1,S_{[3,d+1]}^2,\cdots,S_{[t+e+1,d+1]}^{t+e}).
\end{equation}

\vspace{0.2cm}

First, with the method similar to the proof of Theorem \ref{third proof} and \ref{fourth proof}, we have
\begin{equation}\label{easy7}
\left\{\begin{aligned}
&H(S_{[t+e+1,d+1]}^{t+e}|S^{[1,t+e-1]})\\
&=H(S^{[1,t+e]})-H(S^{[1,t+e-1]})\\
&=\{(t+e)\alpha+(k-t-e)\big{[}t\beta+\theta_1+\cdots+\theta_{e}\big{]}\}-\{(t+e-1)\alpha+(k-t-e+1)\big{[}t\beta+\theta_1+\cdots+\theta_{e-1}\big{]}\}\\
&=\alpha-\big{[}t\beta+\theta_1+\cdots+\theta_{e-1}\big{]}+(k-t-e)\theta_e.
\end{aligned}\right.
\end{equation}

Second, we obtain
\begin{equation}\label{easy8}
\left\{\begin{aligned}
&H(S_{[t+e+1,d+1]}^{t+e}|S^{[1,t+e-1]})\\
&\leq(d-t-e+1)\theta_e,
\end{aligned}\right.
\end{equation}
which can be derived as inequality (\ref{easy4}).

\vspace{0.2cm}

Then, combining equation (\ref{easy7}) with (\ref{easy8}), we derive
$(d-k+1)\theta_e\geq \alpha-\big{[}t\beta+\theta_1+\cdots+\theta_{e-1}\big{]}$,
from which we further have
\begin{equation}
\theta_e \geq \frac{(d-k-t+1)\beta}{d-k+1}-\frac{\theta_1+\cdots+\theta_{e-1}}{d-k+1}.
\end{equation}
Through rearrangement, it can be changed to
\begin{equation}
\theta_1+\cdots+\theta_e \geq \frac{(d-k-t+1)\beta}{d-k+1}+\frac{(d-k)(\theta_1+\cdots+\theta_{e-1})}{d-k+1}.
\end{equation}
By setting $\omega(e)=\theta_1+\cdots+\theta_e$, we obtain
\begin{equation}
\omega(e) \geq \frac{d-k}{d-k+1}\omega(e-1)+\frac{(d-k-t+1)\beta}{d-k+1}.
\end{equation}
From Theorem \ref{discover}, we know $\theta_1\geq \frac{d-k-t+1}{d-k+1}\beta$. Hence, by the method of recursion and induction, we have
\begin{equation}
\omega(e) \geq \beta(d-k-t+1)\big{[}1-(\frac{d-k}{d-k+1})^e\big{]}.
\end{equation}

To this end, we have $\pi(\beta,l_2)=H(S_i^{[1,t+e]})=t\beta+\omega(e)\geq t\beta+\beta(d-k-t+1)\big{[}1-(\frac{d-k}{d-k+1})^e\big{]}$.

\end{proof}

\begin{rem}
Theorem \ref{discover1} is the supplementary of Theorem \ref{discover}, which expands the range of values that $l_2$ can be taken from. In Theorem \ref{discover}, $e$ only can be taken by $1$, while Theorem \ref{discover1} takes $e$ by any value only needing to satisfy $l_1+t+e\leq k-1$ and $t\leq d-k+1$. As stated in Remark \ref{syst}, Theorem \ref{discover1} basically also applies to systematic MSR codes for $F\in[1,k]$.

In fact, the upper bound given in \cite{Re:S.Goparaju} is also a special case of our Theorem \ref{discover1}. Zigzag codes \cite{Re:I.Tamo} by pre-coding of MRD codes are shown to be able to achieve this bound on secrecy capacity in \cite{Re:S.Goparaju}. Since $\alpha=(d+1-k)^k$, we know $\beta=(d+1-k)^{k-1}> d+1-k$, which similarly indicates that only $t=1$ conforms the constraints on $\beta$ required by our Theorem \ref{discover1}. Thereby, we have $l_2=e+1$, from which we derive $\pi(\beta,l_2)\geq\beta+\beta(d-k)\big{[}1-(\frac{d-k}{d-k+1})^{l_2-1} \big{]}$. By simplification, we further have
\begin{equation}
\left\{\begin{aligned}
&\pi(\beta,l_2)\\
&\geq \beta+\beta(d-k)\big{[}1-(\frac{d-k}{d-k+1})^{l_2-1} \big{]}\\
&=\alpha-\beta(d-k)(\frac{d-k}{d-k+1})^{l_2-1}\\
&=\alpha-\alpha(1-\frac{1}{d-k+1})^{l_2},
\end{aligned}\right.
\end{equation}
which leads to that $B^{(s)}\leq (k-l_1-l_2)\big{(}1-\frac{1}{d-k+1}\big{)}^{l_2}\alpha$. It is exactly consistent with the bound given in \cite{Re:S.Goparaju}, i.e., $B^{(s)}\leq (k-l_1-l_2)\big{(}1-\frac{1}{n-k}\big{)}^{l_2}\alpha$.

Although some upper bounds (limited to $t=1$ or $\beta>d-k+1$) in this category are achievable for the Zigzag codes considered in \cite{Re:S.Goparaju,Re:Rawat.A.S}, they are not generally achievable for all other MSR codes with $\{\beta>d-k+1\}$. For example, for those vector MSR codes with $\{\beta>d-k+1\}$ designed by concatenating $m$ same scalar MSR codes where $m>d-k+1$, their secrecy capacity is exactly equal to $(k-l_1-l_2)(\alpha-l_2\beta)$ following from Corollary \ref{Secure size further1} where $\beta=m$, since each scalar MSR code within a vector MSR code shares the same code construction and can be designed to be mutually independent. It is obvious\footnote{In the category where $l_2=t+e$ and $\frac{d-k+1}{t}\leq\beta<\frac{d-k+1}{t-1}$, we have
$B^{(s)}\leq (k-l_1-l_2)\big\{\alpha-t\beta-\beta(d-k-t+1)\big{[}1-(\frac{d-k}{d-k+1})^{e} \big{]}\big\}$. Through analysis, one can check the term $\beta(d-k-t+1)\big{[}1-(\frac{d-k}{d-k+1})^{e}\big{]}<\beta(d-k-t+1)\big{[}e(1-\frac{d-k}{d-k+1})\big{]}=e\beta(1-\frac{t}{d-k+1})=e\beta-\frac{et\beta}{d-k+1}\leq e\beta-e< e\beta$. So, we derive $t\beta+\beta(d-k-t+1)\big{[}1-(\frac{d-k}{d-k+1})^{e} \big{]}<(t+e)\beta=l_2\beta$, which leads to that $(k-l_1-l_2)(\alpha-l_2\beta)<(k-l_1-l_2)\big{(}1-\frac{1}{d-k+1}\big{)}^{l_2}\alpha$ when $t=1$.} that $(k-l_1-l_2)(\alpha-l_2\beta)<(k-l_1-l_2)\big{(}1-\frac{1}{d-k+1}\big{)}^{l_2}\alpha$, which means that those vector MSR codes cannot reach the bounds in Theorem \ref{discover1}. Therefore, unlike Category 1, the value of $H(S_g^F)$ or $\pi(\beta,l_2)$ in Category 2 cannot be determined, i.e. its precise value may vary with different MSR codes' constructions.

\end{rem}

\subsection{Putting Together}
Now combining the two categorizes on secrecy capacity of linear MSR codes, we give the following comprehensive and explicit result on secrecy capacity for any linear MSR codes with $\{n=d+1\}$.

\begin{theorem}\label{discover2}
Given a linear MSR code with $\{n=d+1,k,d,\alpha,\beta\}$, for $l_1+l_2\leq k-1$, we have
\begin{equation}
B^{(s)}=(k-l_1-l_2)\big(\alpha-\pi(\beta,l_2)\big),
\end{equation}
where $\pi(\beta,l_2)$
\begin{equation}\label{explicit}
\left\{\begin{array}{ll}
=l_2\beta, &\textrm{if} \quad l_2\leq t, \beta<\frac{d-k+1}{t-1};\\
\geq t\beta+\beta(d-k-t+1)\big{[}1-(\frac{d-k}{d-k+1})^{e} \big{]}, &\textrm{if} \quad l_2=t+e, \frac{d-k+1}{t}\leq\beta<\frac{d-k+1}{t-1},
\end{array}\right.
\end{equation}
where $1\leq t\leq d-k+1$ and $e\geq 1$.
\end{theorem}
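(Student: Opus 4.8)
The plan is to assemble Theorem \ref{discover2} purely as a bookkeeping consolidation of the results already proved in Sections 4.1 and 4.2, since all the analytic work has been done in Theorems \ref{third proof}, \ref{fourth proof}, \ref{discover} and \ref{discover1}. First I would invoke Remark \ref{simple} (which rests on Theorem \ref{sim}, Lemma \ref{Secure size} and Lemma \ref{interesting1}) to reduce the entire statement to determining or bounding the single quantity $H(S_g^F)$, where $g\in G$ and $|F|=l_2$; indeed $B^{(s)}=(k-l_1-l_2)\big(\alpha-H(S_g^F)\big)$ in the linear MSR setting by the MRD precoding argument of Remark \ref{secrecy achieve}. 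So the theorem is equivalent to the piecewise description of $\pi(\beta,l_2):=H(S_g^F)$ given in \eqref{explicit}.

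Next I would split on where $\beta$ and $l_2$ sit relative to the thresholds $\frac{d-k+1}{t}$. Fix the unique integer $t$ with $1\leq t\leq d-k+1$ such that $\frac{d-k+1}{t}\leq\beta<\frac{d-k+1}{t-1}$ (with the convention that the upper bound is $+\infty$ when $t=1$). For the first branch, $l_2\leq t$: here $\beta<\frac{d-k+1}{t-1}\leq\frac{d-k+1}{l_2-1}$, so either $\beta=1$ and Theorem \ref{third proof} applies, or $\beta>1$ and $l_2<1+\frac{d-k+1}{\beta}$ holds, so Theorem \ref{fourth proof} applies; in both cases $H(S_g^F)=l_2\beta$, which is exactly Corollary \ref{Secure size further2}. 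For the second branch, write $l_2=t+e$ with $e\geq 1$. When $e=1$ this is precisely Theorem \ref{discover}, and for general $e\geq 1$ it is Theorem \ref{discover1}, both of which give $H(S_g^F)=\pi(\beta,l_2)\geq t\beta+\beta(d-k-t+1)\big[1-(\tfrac{d-k}{d-k+1})^{e}\big]$. One should also note for consistency that when $e=1$ the expression $t\beta+\beta(d-k-t+1)\big[1-\tfrac{d-k}{d-k+1}\big]=t\beta+\tfrac{d-k-t+1}{d-k+1}\beta$ coincides with the bound in Theorem \ref{discover}, so the two sub-results glue seamlessly at the boundary $e=1$.

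Finally I would record that all of Theorems \ref{third proof}, \ref{fourth proof}, \ref{discover}, \ref{discover1} are proved under the running hypotheses $l_1+l_2\leq k-1$ and $l_2\leq\min\{k-1,d-k+1\}$ embedded through $F\subseteq T$, $|T|=d-k+1$ and $t\leq d-k+1$, so these conditions carry over verbatim to the combined statement; then substituting $\pi(\beta,l_2)=H(S_g^F)$ back into $B^{(s)}=(k-l_1-l_2)\big(\alpha-H(S_g^F)\big)$ yields the displayed formula. The only genuine care needed — the "obstacle", such as it is — is checking that the two cases in \eqref{explicit} partition all admissible $(\beta,l_2)$ without gap or overlap: every $\beta\geq 1$ falls in exactly one interval $[\frac{d-k+1}{t},\frac{d-k+1}{t-1})$, and for that $t$ every $l_2$ with $l_2\leq k-1-l_1$ is either $\leq t$ (first branch) or equals $t+e$ for a unique $e\geq 1$ (second branch). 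Once this dichotomy is verified, the theorem is immediate; no new estimates are required.
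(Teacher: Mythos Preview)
Your proposal is correct and matches the paper's approach exactly: the paper presents Theorem \ref{discover2} in a subsection titled ``Putting Together'' with no separate proof, merely stating that it combines the two categories (i.e., Corollary \ref{Secure size further2} for the first branch and Theorem \ref{discover1} for the second). Your write-up is in fact more explicit than the paper's, since you spell out the case split on $t$ and verify the seamless gluing at $e=1$, which the paper leaves entirely implicit.
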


\begin{rem}
In the literature, the known linear MSR codes are comprised of the scalar MSR codes with $\{\beta=1\}$ \cite{Re:K.Rashmi,Re:C.Suh,Re:Y.Wu1,Re:K.V.Rashmi,Re:Rashmi,Re:N. B. Shah} and the vector MSR codes with $\{\beta=(d-k+1)^{x}\}$ where $x\geq 1$ \cite{Re:Z. Wang,Re:D. S. Papailiopoulos,Re:I.Tamo,Re:Z. Wang1,Re:V. R. Cadambe,Re:V. R. Cadambe1,Re:G. K. Agarwal,Re:Y.S.Han,Re:J.Li}. It should be noted that these vector MSR codes are not designed from concatenation of scalar MSR codes. Similar to Zigzag codes \cite{Re:I.Tamo}, they share the same intersection pattern, i.e. there exist the same dependence within disjoint sets of repair data transmitted from any one node (e.g. $S_g^F$ where $g\notin F$). Thus, the second item in formula (\ref{explicit}) also applies to them.

\end{rem}

\subsection{Further Discussions}

Theorem \ref{discover2} exhibits a comprehensive and explicit result on secrecy capacity for any linear MSR code given the parameter set $\{n=d+1,k,d,\alpha,\beta\}$ and the $(l_1,l_2)$-eavesdropper model. In retrospect, all constructions of linear MSR codes are based on the scalar case $\beta=1$ or partial vector cases where $\beta$ is required to be exponential in $d-k+1$. Thus, designing linear vector MSR codes with $\{1<\beta<d-k+1\}$ by no concatenation remains open. Nevertheless, our Theorem \ref{discover2} still presents certain results on secrecy capacity for these unexplored MSR codes. Thereby, we put forward two questions as follows.

\begin{question}
Do there exist the linear MSR codes with $\{1<\beta<d-k+1\}$ by no concatenation? If so, how can we construct them?
\end{question}

\begin{question}
Given such a construction with $\{1<\beta<d-k+1\}$, is it achievable for the bounds given in formula (\ref{explicit}) when $l_2\geq 1+\frac{d-k+1}{\beta}$?
\end{question}

\begin{rem}
According to formula (\ref{explicit}), for the linear MSR codes with $\{1<\beta<d-k+1\}$ by no concatenation, when $l_2< 1+\frac{d-k+1}{\beta}$, it must be that
\begin{equation}\label{unex1}
B^{(s)}=(k-l_1-l_2)(\alpha-l_2\beta).
\end{equation}
However, when $l_2\geq 1+\frac{d-k+1}{\beta}$, formula (\ref{explicit}) leads to that
\begin{equation}\label{unex}
B^{(s)}\leq(k-l_1-l_2)\big(\alpha-t\beta-\beta(d-k-t+1)\big{[}1-(\frac{d-k}{d-k+1})^{l_2-t} \big{]}\big),
\end{equation}
where $t=\frac{d-k+1}{\beta}$ if $\beta$ divides $d-k+1$, and $t=\lfloor 1+\frac{d-k+1}{\beta} \rfloor$ if $\beta$ does not divide $d-k+1$. Hence, as in Question 2, we ask whether it is achievable for the upper bound (\ref{unex}) given such a code.

Overall, Theorem \ref{discover2} predicts certain results on secrecy capacity for these unexplored MSR codes, which consist of the precise value of secrecy capacity (\ref{unex1}) and the upper bound on secrecy capacity (\ref{unex}).
\end{rem}

\section{CONCLUSION}

In this paper, we carry out research on the secrecy capacity of MSR codes. We assume the passive adversarial model where the eavesdropper can observe the contents of certain nodes and the repair data of some other nodes. Although the secrecy capacity of MBR codes has been characterized completely \cite{Re:S.Pawar}, it is a challenging task to analyze the secrecy capacity of MSR codes \cite{Re:N.B.Shah,Re:S.Goparaju,Re:Rawat.A.S}. The additional difficulty comes from the fact that the amount of data transmitted for a failed node in MSR codes, is not entirely stored on the node undergoing repair, making it challenging to compute the joint entropy of the repair data. With such a system model, we focus on investigating the repair data in the MSR scenario from the information-theoretic perspective.

We first obtain some information-theoretic properties and some upper bounds on secrecy capacity for general MSR codes, in addition to which we introduce a new concept named by \emph{stable} MSR codes. For the unstable MSR codes, we assume the eavesdropper could identify the nodes with repair data captured and demonstrate that its secrecy capacity is strictly less than that of \emph{stable} MSR code. In the linear MSR scenario, we utilize permutation polynomial and orthogonal system in finite fields to explain the fact that entropy of multiple sets of repair data is an integer and ultimately derive a comprehensive and explicit result on secrecy capacity which closely depends on the value of $\beta$. This outcome not only explains and generalizes the previous results in \cite{Re:N.B.Shah,Re:S.Goparaju,Re:Rawat.A.S}, but also predicts certain results for some unexplored linear MSR codes. After that, we put forward two related questions. On the other hand, we find that all of these results also apply to systematic MSR codes with repair data of systematic nodes captured.

\end{document}